\title{\bf Holographic Special Relativity}
\author{
{\bf Derek K.\ \!Wise} \\[.5em]
{\sl \small Institute for Quantum Gravity} \\[-.3em]
{\sl \small Universit\"at Erlangen--N\"urnberg} \\[-.3em]
{\sl \small Staudtstr.\ \!7/B2,\! 91058 Erlangen,\! Germany} \\
\small \texttt{derek.wise@gravity.fau.de} 
}
\date{May 14, 2013}
\tikzset{small/.style={font=\fontsize{9}{9}\selectfont}}
\tikzset{vsmall/.style={font=\fontsize{7}{7}\selectfont}}
\renewcommand{\S}{\mathcal{S}} %space
\newcommand{\ff}{\mathcal{F}} % fake frame bundle
\newcommand{\fo}{\mathcal{O}} % fake observer space
\newcommand{\Met}{\mathcal{G}} % space whose sections are metrics in a conformal class
\newcommand{\iO}{\overline{\mathcal{O}}} %inertial observer space. 
\newcommand{\GH}{S^{3,1}} % homogeneous space G/H
\newcommand{\GK}{O} % homogeneous space G/K
\newcommand{\GKt}{\overline{O}}
\newcommand{\HK}{\mathrm{H}^3} % homogeneous space H/K
\newcommand{\cone}{\mathcal{C}}
\newcommand{\pastcone}{\mathcal{C}^{-}}
\newcommand{\futurecone}{\mathcal{C}^{+}}
\newcommand{\GP}{P(\cone)} % homogeneous space G/P
\newcommand{\ggh}{\mathfrak{z}} % tangent space to homogeneous space G/H
\newcommand{\ghk}{\mathfrak{y}} % tangent space to homogeneous space H/K
\newcommand{\gm}{\mathfrak{g}_{\scriptscriptstyle{-1}}} %
\newcommand{\Go}{G_{\scriptscriptstyle{0}}} %
\newcommand{\go}{\mathfrak{g}_{\scriptscriptstyle{0}}} %
\newcommand{\gp}{\mathfrak{g}_{\scriptscriptstyle{+1}}} %
\newcommand{\gpm}{\mathfrak{g}_{\scriptscriptstyle{\pm}}} %
\newcommand{\Hyp}{\mathrm{H}^4} % hyperbolic space
\newcommand{\past}{a_{-}} % the past asymptotic map
\newcommand{\future}{a_{+}} % the future asymptotic map
\newcommand{\Ad}{{\rm Ad}}
\newcommand{\Hpp}{H''}
\newcommand{\hpp}{\mathfrak{h}''}
\newcommand{\Kt}{\Go}%{\widetilde{K}}
\newcommand{\SIM}{{\rm SIM}}
\newcommand{\GL}{{\rm GL}}
\newcommand{\SO}{{\rm SO}}
\newcommand{\so}{\mathfrak{so}}
\newcommand{\ISO}{{\rm ISO}}
\newcommand{\g}{\mathfrak{g}}
\newcommand{\h}{\mathfrak{h}}
\newcommand{\p}{\mathfrak{p}}
\renewcommand{\k}{\mathfrak{k}}
\newcommand{\R}{\mathbb{R}}
\newcommand{\Z}{\mathbb{Z}}
\newcommand{\maps}{\colon}
\newcommand{\define}[1]{{\bf #1}}
\newtheorem{thm}{Theorem}%[section]
\newtheorem{prop}[thm]{Proposition}
\newtheorem{defn}[thm]{Definition}
\newenvironment{proof}{\noindent
\textbf{Proof: }}{\hfill\rule{.6em}{.8em} \medskip}
\newenvironment{proof.within.proof}
{\noindent{\it Proof:}}{
\hfill $\Box$ \medskip}
\newcounter{Ccounter} % A special counter for the properties defining a Cartan geometry
\newenvironment{C-list}{  % The corresponding list environment
\begin{list}{{\rm C\arabic{Ccounter}}.}{\usecounter{Ccounter}}
}{\end{list}}
\newcommand{\arxiv}[1]{\href{http://arxiv.org/abs/#1}{arXiv:\nolinkurl{#1}}}
\newcommand{\narxiv}[2]{\href{http://arxiv.org/abs/#1}{arXiv:\nolinkurl{#1 [#2]}}}
\begin{document}

\maketitle

\thispagestyle{empty}

\begin{abstract}\noindent
We reinterpret special relativity, or more precisely its de Sitter deformation, in terms of 3d conformal geometry, as opposed to (3+1)d spacetime geometry.  An inertial observer, usually described by a geodesic in spacetime, becomes instead a choice of ways to reverse the conformal compactification of a Euclidean vector space up to scale.  The observer's `current time,' usually given by a point along the geodesic, corresponds to the choice of scale in the decompactification.  We also show how arbitrary conformal 3-geometries give rise to `observer space geometries,' as defined in recent work, from which spacetime can be reconstructed under certain integrability conditions.  We conjecture a relationship between this kind of `holographic relativity' and the `shape dynamics' proposal of Barbour and collaborators, in which conformal space takes the place of spacetime in general relativity.   We also briefly survey related  pictures of observer space, including the AdS analog and a representation related to twistor theory.  
\end{abstract}  

\section{Introduction}

Minkowski introduced the idea of {\em spacetime} in 1908 as a conceptual framework for Einstein's theory of special relativity, published three years before \cite{minkowski}.  As radical as it then seemed, the idea of spacetime is today hardly questioned, and indeed plays essential roles in the two current pillars of fundamental physics: quantum field theory and general relativity.  In fact, unlike special relativity, the very idea of spacetime was fundamental in the invention of general relativity. 

Yet despite the indelible impression Minkowski's insight has left on our thinking, relativity is not fundamentally about spacetime.  It is foremost concerned with how different observers view the world around them, and how their views relate to each other.  The spacetime perspective is compelling precisely because it efficiently accounts for all these possible viewpoints.  Namely, starting with spacetime, the \define{observer space}---the space of all possible observers in the universe---is simply the space of future-directed unit timelike vectors.   But spacetime is not the only framework for understanding observers.  

In this paper we present an alternative way of encoding observers using not Lorentzian but rather  {\em conformal} geometry.  As in (A)dS/CFT, the key to this relationship is a certain Lie group coincidence: the connected de Sitter group $$G:=\SO_o(4,1)$$ is not only the proper orthochronous isometry group of de Sitter spacetime $\GH$ but also the group of symmetries of the conformal 3-sphere, represented as a projective light cone  $\GP$ in $(4+1)$-dimensional Minkowski spacetime.   The relationship between these two spaces is subtle.  While both are homogeneous $G$-spaces, and $\GP$ can be thought of as the past or future boundary of $\GH$, there is no $G$-equivariant map between $\GH$ and $\GP$, and hence no direct way of mapping things happening in spacetime to things happening on the boundary, or vice versa, in a way that respects the symmetries of both.

On the other hand, observers take priority over spacetime.  The observer space of de Sitter spacetime is also a homogeneous $G$-space and,  while there is no equivariant map between $\GH$ and $\GP$, there is the next best thing---a span of equivariant maps, with observer space $\GK$ at the apex:
\begin{equation*}
\label{span}
  \begin{tikzpicture}[scale=1,>=stealth']
      \node (spacetime) at (0,0) {$\GH$};
      \node (conf) at (2,0) {$\GP$};
      \node (obs) at (1,1.4) {$\GK$}
          edge [->]  node [above] {$\pi$} (spacetime)
          edge [->]  node [above] {$a$} (conf);
  \end{tikzpicture}
\end{equation*}
The map on the left is just a restriction of the tangent bundle: it sends each observer to its base event in spacetime.  

The most direct way to describe the map on the right is to say it sends each observer to the asymptotic past of its geodesic extension.  This is true, but not quite satisfactory since it relies on spacetime, and hence on the map $\pi\maps \GK \to \GH$ for its definition.  To explore the idea of encoding observers using conformal geometry rather than spacetime geometry, we would prefer a description of $a\maps \GK \to \GP$ that does not involve spacetime as an intermediate step.  

To arrive at a more intrinsic description, note that while de Sitter spacetime has a conformal 3-sphere as its asymptotic boundary, this boundary seems to any given {\em observer} to have more than a conformal structure: it appears to be an ordinary sphere of infinite radius, effectively  a three-dimensional affine space, once we delete the point antipodal to the observer's asymptotic past.   Using the observer's asymptotic past as the `origin,' this affine space becomes a vector space.  

This `vector space in the infinite past' has an inner product only up to scale, since the magnitude  diverges as we push further back into the past.  However, we can renormalize the scale of this vector space by declaring the unit sphere to be at the asymptotic past of the observer's current light cone.  This makes the boundary, after removing the `point at infinity,' an inner product space, or Euclidean vector space. 

In summary, an observer comes with a particular idea of how to put a Euclidean vector space structure on the complement of a point in the conformal sphere.  For short, we can refer to this process as \define{Euclidean decompactification}, since it reverses the conformal compactification of a Euclidean vector space.  Remarkably, observers in de Sitter spacetime correspond one-to-one with such Euclidean decompactifications of the sphere.   Ultimately, we get two isomorphic perspectives on the de Sitter observer space:
\begin{enumerate}
\item Observers live in de Sitter spacetime.  Different observers are distinguished by their velocity vectors, so observer space is the space of all unit future-directed timelike vectors.  Conformal space is an auxiliary construction given by identifying all observers whose geodesic extensions are asymptotic in the past.  

\item Observers live in the conformal 3-sphere.   Different observers are distinguished by their preferred  Euclidean decompactification, so observer space is the space of all such decompactifications.  Spacetime is an auxiliary construction given by identifying all observers who share the same co-oriented unit sphere.  
\end{enumerate}
A co-orientation of a 2-sphere embedded in $\GP$ is just a choice which complementary component is `inside' the 2-sphere, here determined by the `origin' of a given observer.

Part of this article is concerned with fleshing out the above ideas in more detail, and is essentially expository in nature.  Many authors have nicely explained the geometry of de Sitter spacetime (see e.g.\ \cite{Coxeter,HawkingEllis,sightseeing} for an interesting sample), and there is necessarily some overlap here.  However, the purpose is not to repeat available explanations but rather to explain the geometry of {\em observers} in the de Sitter universe, and in particular the relation between observers and conformal geometry.  I am not aware of any similar exposition.   

A more serious goal is to provide new examples of `observer space geometries.'  In previous work \cite{lifting}, we have studied deformations of the de Sitter observer space, using Cartan geometry to provide a general definition of `observer space' flexible enough to unify the geometric treatment of a wide variety of theories of space and time.  Any solution of general relativity is an example, but so are the observer spaces of spacetime theories with preferred foliations \cite{gielen}, galilean spacetime, Finsler spacetime \cite{hohmann}, and models with no invariant notion of spacetime at all \cite{lifting}.  Here we will see that observer space geometries also arise `holographically' from 3-dimensional conformal geometry.  The main mathematical result is Theorem \ref{mainthm}, giving a canonical Cartan geometry on a certain bundle over any 3d conformal manifold, modeled on the observer space of de Sitter spacetime.

\section{Observers in de Sitter spacetime}

The \define{observer space} of a time-oriented Lorentzian spacetime is the space of normalized future-directed timelike tangent vectors \cite{lifting}.  For the moment, we are mainly concerned with $(3+1)$-dimensional {de Sitter spacetime} and its 7-dimensional observer space.  Both of these spaces can be conveniently described using $\R^{4,1}$, the vector space $\R^5$ equipped with the standard Minkowski inner product $\eta$ of `mostly positive' signature.   \define{De Sitter spacetime} is the pseudosphere $\GH\subset \R^{4,1}$ of all points with spacelike distance $\sqrt{3/\Lambda}$ from the origin; here we normalize the cosmological constant $\Lambda$ so that this spacelike radius is 1.   Thus:
\[
   \GH = \{x \in \R^{4,1} : \eta(x,x) = 1\}
\]
For the empirical value of $\Lambda$ in our universe, this means using units of about $2\times 10^{28}$ meters, or  equivalently $6.7\times 10^{19}$ seconds, since we also set the speed of light to 1.  These numbers indicate just how minuscule the deviation from Minkowski spacetime in common units.

We will make much use of linear and affine subspaces of $\R^{4,1}$.  If $v,w,\ldots$ are nonzero vectors in $\R^{4,1}$, we denote their linear span by $[v,w,\ldots]$.   The light cone at a point $x$ in de Sitter spacetime is the intersection of de Sitter spacetime with the affine plane $x+[x]^\perp$:
\begin{equation}
\label{lightconeplane}
\begin{tikzpicture}[small,baseline=(current bounding box.center)]
    \node[anchor=south west,inner sep=0] (image) at (0,0) {\includegraphics[height=3.7cm]{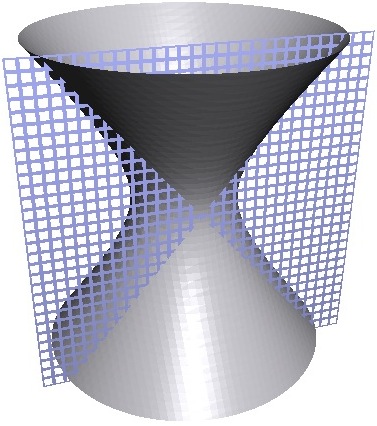}};
    \begin{scope}[x={(image.south east)},y={(image.north west)}]
    \node [blue] (event) at  (0.54,0.485) {$\bullet$};
    \node [blue] at (.54,.4) {$x$};
    \node (plane) at (1.2,0.8) {$x+[x]^\perp$};
    \draw [->,>=stealth] (1.5,.4)--(1.5,.6); \node at (1.6,.5)  {time};
    \end{scope}
\end{tikzpicture}
\end{equation}

This same picture helps describe the observer space of de Sitter spacetime.  At a point $x\in \GH$, the tangent space $T_x \GH$ may be identified with $x+[x]^\perp$. 
An observer is a unit future-timelike vector, so translating the tangent plane $x+[x]^\perp$ back to the origin of $\R^{4,1}$, we can think of particle velocities at $x$ as living in $[x]^\perp\cap \Hyp$, where 
\[
   \Hyp = \{u \in \R^{4,1} : \eta(u,u) = -1,\, u_0 > 0\}
\]
is 4-dimensional hyperbolic space.  This lets us see \define{de Sitter observer space} as a subspace of $\GH\times\Hyp$:
\[
   \GK = \{(x,u) \in \GH\times \Hyp : \eta(x,u) = 0\}.
\]
The projection $(x,u)\mapsto x$ is a fiber bundle over spacetime, each fiber isomorphic to $\HK$. 

The \define{surface of simultaneity} of the observer $(x,u)$ is defined, just as in Minkowski spacetime, to be the totally geodesic 3-dimensional surface orthogonal to the observer's velocity; this is just $[u]^\perp\cap M$.  Unlike their Minkowski analogs, the surfaces of simultaneity of a given inertial observer at different times are not disjoint.  This leads to some peculiar large-scale behavior: two observers in distant parts of the universe can view the same `surfaces of simultaneity' as occurring in the opposite chronological order.   Coxeter's article \cite{Coxeter} begins with a Lewis Carroll quote, presumably for this reason.  Like the `paradoxes' of Minkowskian special relativity, this causes no problems: two such observers are on opposite sides of each other's cosmological horizon.  In any case, the space of these surfaces of simultaneity, the space of possible `nows' is just $\Hyp$.

What we have termed an observer might be more accurately called an `instantaneous observer.'   In contrast, an \define{inertial observer} is a timelike geodesic, and the space of these is the \define{inertial observer space}, $\GKt$.  The geodesic through the observer $(x,u)$ is one of the two curves at the intersection of $\GH$ with the subspace $[x,u]$:
\begin{center}
\includegraphics[height=5cm]{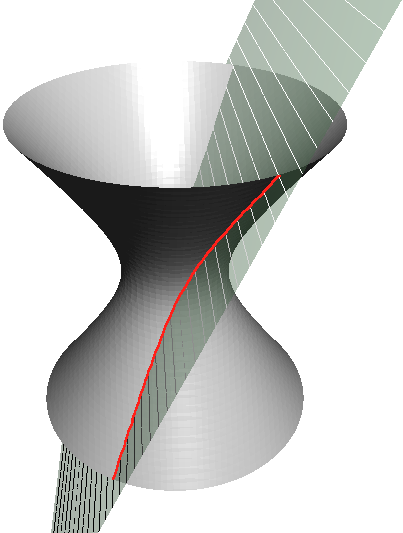}
\end{center}
The proper-time parameterization of this geodesic, starting at $x\in \GH$, is 
\begin{equation}
\label{geodesic}
    \gamma(\tau) = (\cosh \tau) x + (\sinh \tau) u,
\end{equation}
and the path followed by the inertial observer through observer space $\GK$ is $\tau \mapsto (\gamma(\tau),\dot\gamma(\tau))$, where the dot denotes the $\tau$ derivative.

We now turn to 3-dimensional conformal space, of which a well known model is the \define{projective light cone} $\GP$: the space of one-dimensional null subspaces of $\R^{4,1}$.   Let us recall how $\GP$ gets its conformal structure.  If $\pastcone\subset \R^{4,1}$ is the past light cone, we have the canonical map 
\begin{equation}
\label{pastlightcone}
  \begin{tikzpicture}[>=stealth,baseline=(current bounding box.center)]
     \node (base) at (0,0) {$\GP$};
     \node  at (0,1) {$\phantom{{}^{-}}\pastcone$}
        edge [->] (base);
     \node [small] (base-elt) at (1.3,0) {$[v]$};
     \node [small] at (1.3,1) {$v$}
        edge [|->] (base-elt);
  \end{tikzpicture}
\end{equation}
The tangent space of $\pastcone$ at $v$ is canonically isomorphic to $[v]^\perp$, via translation along $v$, and all vectors in $[v]^\perp$ that are not proportional to $v$ are spacelike.  Thus, if $q$ is any section of (\ref{pastlightcone}), the pullback of the metric along that section is a Riemannian metric.  Moreover, if we adjust the section $q$ by multiplying it by a positive function on $\GP$, the corresponding metric just changes by the same multiple.   Thus $\pastcone$ is isomorphic to the tautological bundle of a conformal metric on $\GP$ \cite{FG}.

Two canonical maps from de Sitter observer space to the conformal 3-sphere are given by sending a given observer inertially into the infinite past or future.  These maps can be nicely visualized using the ambient space $\R^{4,1}$.  First, an observer gives a subspace $[x,u]$ of $\R^{4,1}$, intersecting $\GH$ along the corresponding inertial observer's worldline (\ref{geodesic}) and the antipodal worldline.  Then, forgetting about $\GH$,  this same subspace $[x,u]$ intersects the light cone in a pair of null lines; one of these is  asymptotic in the past to the worldline, while the other is asymptotic in the future.  We can draw the past asymptotic map as
\begin{equation}
\label{pastmap}
\begin{tikzpicture}[baseline=(current bounding box.center)]
    \node[anchor=south west] (image) at (0,0) {\includegraphics[height=5cm]{dsplane2.png}};
    \node[anchor=south west] (image2) at (5,0) {\includegraphics[height=5cm]{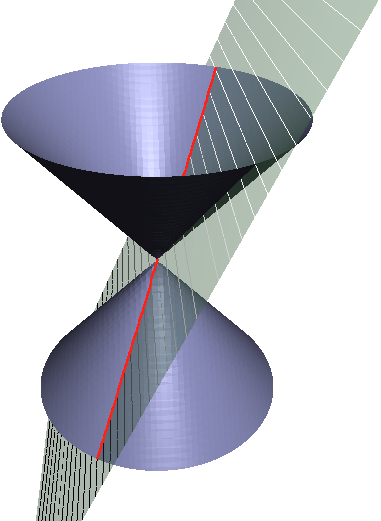}};
   \node (hyperbola) at (1.45,1.2) {};
   \node (lightray) [above of=image2] {};
   \node (geo-label) [small,text width=4cm, text badly ragged] at (-1.5,3) {\sf timelike geodesic determined by the observer};
   \node [below of=geo-label, node distance=2em,small] {$\gamma(\tau) = (\cosh \tau) x + (\sinh \tau) u$}
          edge [-, very thin] (hyperbola);          
   \node (confpt-label) [small] at (10,3) {\sf null subspace}
          edge [-, very thin] (lightray);
   \node [below of=confpt-label, node distance=1em,small] {$[x-u]\in \GP$};     
   \node (plane1) at (2.3,2.8) {};
   \node (plane2) at (6.5,2.6) {};  
   \node [small] at (4.5,2) {$[x,u]$}
      edge (plane1)
      edge (plane2);
\end{tikzpicture}
\end{equation}
The future asymptotic map is the same except that the other light ray in $\R^{4,1}\cap [x,u]$ is chosen.  We see that the past and future asymptotic maps are given by  
\[
 \past,\future\maps \GK \to \GP,\qquad 
 \begin{array}{c@{\,=\,}l}
     \past(x,u) & [x-u]\\
     \future(x,u) & [x+u].
 \end{array}
\]
Note that the past and future boundaries are naturally identified, so that both of these maps land in the same copy of conformal space.  

The map $\past\maps \GK \to \GP$ is clearly many-to-one.  In fact, we will see that the preimage of an point in conformal space determines a \define{local field of observers}, a local section of the canonical projection $\GK \to \GH$ from observer space to spacetime.  Using the description of $\GK$ as a subset of $\GH\times \Hyp$, we can think of an observer field as a map $u$ from some region of $\GH$  to $\Hyp$, such that $\eta(x,u(x))=0$ for all $x$.

To find these local observer fields explicitly, pick $[v]\in\GP$, where $v\in \pastcone$.  If an observer $(x,u)$ satisfies $\past(x,u)=[v]$, then since $x-u\in \pastcone$, we have $v = e^t(x-u)$ for some $t\in\R$.  Taking the inner product with $x$, we find that 
\begin{equation}
\label{FRWtime}
e^t={\eta(x,v)},
\end{equation}
and hence in particular that $\eta(x,v)>0$.  This determines a region of de Sitter spacetime above the  subspace $[v]^\perp$: 
\begin{center}
\begin{tikzpicture}
    \node[anchor=south west,inner sep=0] (image) at (0,0) {\includegraphics[width=4cm]{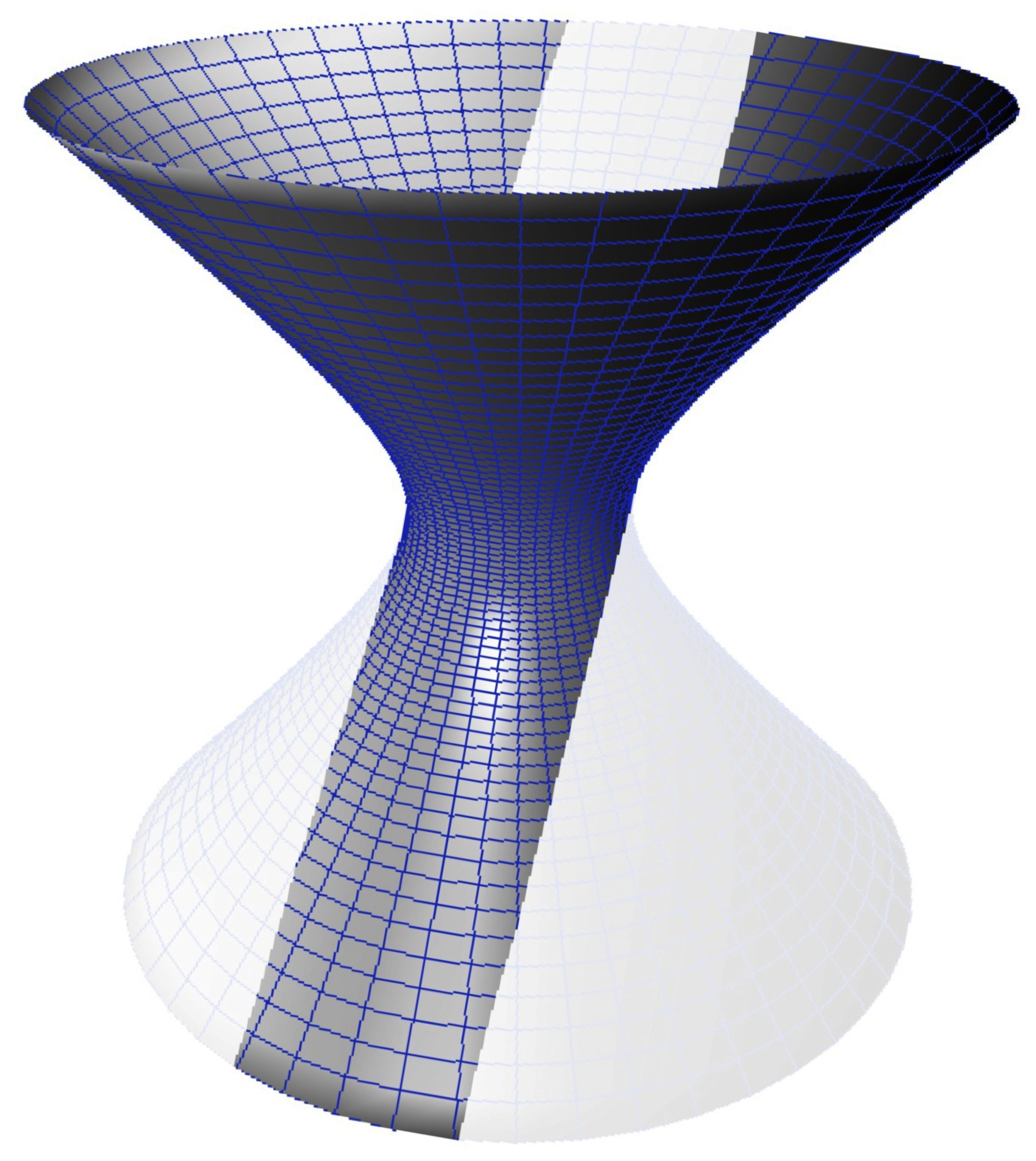}};
    \begin{scope}[x={(image.south east)},y={(image.north west)}]
     \node (patch) at (1.15,0.3) {$\{x\in \GH : \eta(x,v)>0\}$}
        edge (.5,.4);
    \end{scope}
\end{tikzpicture}
\end{center}
On this region, we can solve uniquely for $u$, obtaining a local field of observers given by: 
\begin{equation}
\label{observer-field}
    u(x) = x-\frac{v}{\eta(x,v)}
\qquad \forall x \in \GH  \text{ with } \eta(x,v)>0.
\end{equation}

The observers in this local observer field all share the same \define{past horizon}, the boundary of the future of the intertial observer's worldline.  Using the geodesic (\ref{geodesic}) followed by an inertial observer, the light cone at time $\tau$ (see (\ref{lightconeplane})) is $(\gamma(\tau) + [\gamma(\tau)]^\perp)\cap \GH$, and this tends to $[x-u]^\perp\cap \GH$ as $\tau\to -\infty$.  Since the observer field (\ref{observer-field}) consists of all observers with the same asymptotic past $[x-u(x)]$, all of these observers agree on the past horizon.  

Level surfaces of the  time coordinate $t$ are intersections of de Sitter spacetime with hyperplanes in $\R^{4,1}$ parallel to $[v]^\perp$; they have no intrinsic curvature.   Two observers have the same time coordinate $t$ if and only if they map to the same point in $\pastcone$ under the map $(x,u)\mapsto x-u$. 

It is worth noting that the observer field (\ref{observer-field}) is related to a well known local coordinate system defined on the same region.  For this, let us specialize to the case where $v$ is the point on the light cone with coordinates $(-1,0,0,0,1)$.  On the region $\eta(x,v)>0$, i.e.\ $x^0 + x^4 >0$, we define coordinates
\[
 t= \log(x^0 + x^4) \qquad
 y^i =\frac{x^i}{x^0 + x^4},\quad i=1,2,3
\]
where the time coordinate $t$ was already introduced in (\ref{FRWtime}).  
It is then easy to check that the vector field 
\begin{equation}
\label{vf}
   \frac{\partial}{\partial t}
=
    \left(x^0 + \frac{1}{x^0 + x^4}\right)\frac{\partial}{\partial x^0} +
    x^i \frac{\partial}{\partial x^i} +
    \left(x^4 -  \frac{1}{x^0 + x^4}\right)\frac{\partial}{\partial x^4}
\end{equation}
in these coordinates coincides with the observer field (\ref{observer-field}) determined by our choice of $v$. Coordinate systems for different choices of null subspace $[v]\in \GP$ can be obtained by Lorentz transformations of $\R^{4,1}$. 
However, for us, any single one of these coordinate systems is not important; we care only that the collection of local vector fields (\ref{observer-field}) is a conformal 3-sphere.

\section{Symmetries}
We have discussed relationships among de Sitter observer space, spacetime, and conformal space, but have so far ignored their symmetries.  Let $G$ denote the connected de Sitter group 
$$G=\SO_o(4,1).$$ 
All of the spaces we have discussed---de Sitter spacetime $\GH$, its observer space $\GK$ and inertial observer space $\GKt$, 4-dimensional hyperbolic space $\Hyp$, the past cone $\pastcone\subset\R^{4,1}$ and the conformal 3-sphere $\GP$---are homogeneous $G$-spaces.  Moreover, it is clear that all of the maps we have discussed are $G$-equivariant:
\[
  \begin{tikzpicture}[scale=2.3,small,>=stealth',node distance=.7em,baseline=(current bounding box.north)]
      \node (inertial) at (2.2,.3) {$\GKt$};
      \node (spacetime) at (0,-.5) {$\GH$};
      \node (hyper) at (.7,-.5) {$\Hyp$};
      \node (conf) at (2.2,-1.2) {$\GP$};
      \node (pastcone) at (1.4,-.5) {$\pastcone$};
      \node (obs) at (1,1) {$\GK$};
      \node [below of=obs,font=\fontsize{7}{7},color=blue] {\sf observer space}
          edge [->]  node [near end] (base-event) {} (spacetime)
          edge [->] node [right,pos=.6] (asym) {$\past$} (conf)
          edge [->]  node (extend) {} (inertial)
          edge [->]  node (base-now) {}(hyper)
          edge [->]  node (pastconemap) {} (pastcone);
      \node [below of=spacetime,font=\fontsize{7}{7},color=blue] {\sf spacetime};
      \node [below of=hyper,node distance=1.1em,vsmall,color=blue,text centered,text width=1.5cm] {\sf hyperbolic space};
      \node [below of=conf,vsmall,color=blue] {\sf conformal 3-sphere};
      \node [below of=inertial,node distance=1.1em,vsmall,color=blue,text centered,text width=1.8cm] {\sf inertial observer space}
           edge [->] (conf);
      \node [below of=pastcone,node distance=1.1em,vsmall,color=blue,text centered,text width=1.8cm] {\sf ambient past light cone}
          edge [->] node [near end] (canon) {} (conf);          
  \end{tikzpicture}
\qquad
  \begin{tikzpicture}[small,scale=2,>=stealth',baseline=(current bounding box.north)]
      \node (conf) at (2.4,-1.2) {$[x-u]$};
      \node (spacetime) at (0,-.5) {$x$};
      \node (hyper) at (.7,-.5) {$u$};
      \node (pastcone) at (1.3,-.5) {$x-u$}
          edge [|->,node distance=1em] (conf);
      \node (inertial) at (2.4,.2) {$\gamma(\R)$}
          edge [|->] (conf);
      \node [circle,inner sep=.4em] (obs) at (1,1) {$(x,u)$}
          edge [|->]  (spacetime)
          edge [|->]  (conf)
          edge [|->]   (inertial)
          edge [|->]  (hyper)
          edge [|->]  (pastcone);
 \end{tikzpicture}
\]
where $\gamma$ is the geodesic (\ref{geodesic}) through the observer. 

Since each of the spaces is homogeneous, they can all be described as \define{Klein geometries}---quotients $G/G'$ where $G'$ is the closed subgroup of $G$ stabilizing an element of the space (see e.g.\ \cite{Sharpe}).  Any equivariant map $G/G'' \to G/G'$ is then induced by some inclusion $G'' \to G'$ of one subgroup of $G$ into another.  Fixing an arbitrary basepoint $(x,u)\in \GK$, the above maps give basepoints in each of the other spaces.  The corresponding stabilizer subgroups and inclusions are then:
\begin{equation}
\begin{array}{l@{\;\cong\;}ll}
K & \SO(3) &   \text{\sf stabilizer of  $(x,u)\in \GK$} \\
\Kt & \SO(3)\times \R & \text{\sf stabilizer of $\gamma(\R)\subset \GH$, with $\gamma$ as in (\ref{geodesic})} \\
H & \SO_o(3,1) & \text{\sf stabilizer of } x 
                               \in \GH \\
H' & \SO(4) &   \text{\sf stabilizer of } u
                               \in \Hyp \\
\Hpp & \ISO(3) & \text{\sf stabilizer of } x-u \in \pastcone \\
P & \SIM(3) &    \text{\sf stabilizer of } [x-u]\in \GP \\
\end{array}
\qquad
  \begin{tikzpicture}[scale=1,vsmall,>=stealth',fill=black!10,baseline=(current bounding box.center)]
      \node (conf) at (2.2,-1.2) {$P$};
      \node (spacetime) at (0,-.5) {$H$};
      \node (hyper) at (.7,-.5) {$H'$};
      \node (pastcone) at (1.3,-.5) {$\Hpp$}
          edge [->] (conf);
      \node (inertial) at (2.2,.3) {$\Kt$}
          edge [->] (conf);
      \node (obs) at (1,1) {$K$}
          edge [->]  (spacetime)
          edge [->]  (conf)
          edge [->]   (inertial)
          edge [->]  (hyper)
          edge [->]  (pastcone);
      \begin{pgfonlayer}{background} 
           \filldraw [fill=black!4,draw=black!2]
              (conf.south -| spacetime.west) rectangle (obs.north -| conf.east);
      \end{pgfonlayer}
  \end{tikzpicture}
\end{equation}
It is convenient to describe these subgroups explicitly in a matrix representation, since all of our spaces are subspaces, or else quotients of subspaces, of $\R^{4,1}$.

Defining $\R^{4,1}:=(\R^5,\eta)$ with $\eta$ the matrix of the standard Minkowski metric on $\R^5$, $G\subset \GL(5)$ is then the connected component of the group of matrices $A$ such that $A^T\eta A = \eta$.  Its Lie algebra $\g\subset \mathfrak{gl}(5)$ consists of all matrices $A$ such that $A^T\eta = -\eta A^T$. As the basepoint for observer space $\GK$, we choose $(x,u)$ with 
\begin{equation}
\label{basepoint}
 x = 
{\small   \left(
  \begin{array}{c}
0\\ 0 \\ 1
  \end{array} 
  \right)}
 , \quad 
 u =
{\small   \left(
  \begin{array}{r}
1\\ 0 \\ 0
  \end{array} 
  \right)}
\end{equation}
where we adopt the convention for such column vectors that the middle entry is really a three-component vector.
Then the subgroups $H, H', K\subset G$ are easily described: $H$ is the upper $4\times 4$ block of $G$, $H'$ the lower $4\times 4$ block, and $K$ is their intersection, the $3\times 3$ block in the middle.  Likewise for the Lie algebras $\h$, $\h'$ and $\k$. 

For the groups related to conformal geometry, it is convenient to temporarily change basis so that the asymptotic past and future of our base observer, corresponding to $[x-u]$ and $[x+u]$ in $\R^{4,1}$, are two of the coordinate axes.  We define $\overline{\R^{4,1}}:=(\R^5,\overline\eta)$ where 
\[
\overline\eta :=  
  \left(
  \begin{array}{ccc}
  0&0&1\\
  0&1_3&0\\
  1&0&0\\
  \end{array}
  \right)
 =S\eta S^T \,, \qquad 
  S=
  \left(
  \begin{array}{ccc}
  \frac{1}{\sqrt 2}&0&\frac{1}{\sqrt 2}\\
  0&1_3&0\\
  -\frac{1}{\sqrt 2}&0&\frac{1}{\sqrt 2}\\
  \end{array}
  \right) 
\]  
so that $S$ is the matrix of a linear isometry $S\maps  \R^{4,1} \to \overline{\R^{4,1}}$.  For any subgroup $G'\subseteq G$ with Lie alebra $\g$, we have $\overline G := SGS^{-1}\subset \GL(5)$ and $\overline \g := S\g S^{-1}\subset \mathfrak{gl}(5)$ as the corresponding symmetry group and Lie algebra on $\overline{\R^{4,1}}$.  In particular, conjugating an arbitrary matrix in $\g$, we find that an element of $\overline\g$ has the form
\begin{equation}
\label{gbar-element}
  \left(
  \begin{array}{ccc}
  \tau&-q^T&0\\
 p &b&q\\
  0&-p^T&-\tau\\
  \end{array}
  \right) 
\end{equation}
where the $3\times 3$ matrix $b$ is anti-symmetric.  The Lie algebra $\overline \p$ of the stabilizer $\overline P$ of the point $[S(x-u)]$ is then the set of such matrices for which $(0,0,1)^T$ is an eigenvector, namely those for which $q=0$.  Likewise, $\overline\go$, the stabilizer algebra of the intertial observer, is the intersection of $\p$ with the stabilizer of $[S(x+u)]$, and hence consists of matrices in $\overline \g$ for which $p=q=0$.  

By exponentiating an element (\ref{gbar-element}) with $q=0$, we find an explicit form for elements of $\overline P$, and check that they can be uniquely factored into the form 
\begin{equation}
\label{factorize}
  \left(
  \begin{array}{ccc}
  e^{\tau}& 0 &0\\
  0&B&0\\
  0&0&e^{-\tau}\\
  \end{array}
  \right) 
  \left(
  \begin{array}{ccc}
  1&0 &0\\
  p&1_3&0\\
  -\frac12 p^Tp&-p^T&1\\
  \end{array}
  \right) 
\end{equation}
Here $B\in \overline K=K\cong \SO(3)$, $\tau\in \R$.  The first factor is clearly the exponential of an element of $\go$, while the second is a multiplicative rewriting of the additive group $\R^3$.   The factorization makes $\overline P$ and hence also $P$, switching back to $\R^{4,1}$, a semidirect product:
\[
    P \cong \Kt\ltimes \R^3\cong (\SO(3)\times \R)\ltimes \R^3
\] 
which is just the group of Weyl or similarity transformations, $\SIM(3)$.   The group $\Hpp$ is clearly the subgroup of $P$ with $\tau = 0$, which is just $\ISO(3)\cong \SO(3) \ltimes \R^3$.

\section{Observer space from conformal geometry}

Each observer in de Sitter spacetime is part of exactly one of the local fields of observers (\ref{observer-field}), and the space of these observer fields is a conformal 3-sphere.  We now consider what additional structure on the conformal 3-sphere specifies a particular observer within the corresponding field of observers.  Ultimately, this will let us produce `observer space geometries' from more general conformal geometries than just $\GP$.

To construct the de Sitter observer space from the conformal sphere, let us begin by constructing the {\em inertial} observer space.   The key is that an inertial observer is completely determined if we specify both its asymptotic past and future, and these may be chosen to be any distinct points of $\GP$. 

This is easy to prove using the ambient space $\R^{4,1}$.   Pick two points $[v]\neq [w]\in\GP$. Any observer $(x,u)$ with asymptotic past $[x-u]=[v]$ and asymptotic future $[x+u]=[w]$ must lie in the intersection of $[v,w]$ with $\GH$.  This intersection is a pair of timelike geodesics, but only one of these geodesics has $[v]$ as its past and $[w]$ as its future, while the other geodesic has it the other way around.  Thus $(x,u)$ is determined up to translation along the worldline.  On the other hand, we can always arrange for the representatives $v$ and $w$ to  $[v]$ and $[w]$ to be such that $v\in \pastcone$ and $w\in\futurecone$ with $\eta(v,w) = 2$.  This uniquely determines a point $(x,u)\in \GK\subset \GH\times \Hyp$ with 
\begin{align}
x-u &= v \nonumber\\ x+u &=w  \label{linsys}
\end{align}
This whole process is $G$-equivariant, and thus we have shown:

\begin{prop}
\label{prop:inertial}
The inertial observer space $\GKt$ of de Sitter spacetime is isomorphic as a $G$-space to the space of ordered pairs of distinct points in $\GP$. 
\end{prop}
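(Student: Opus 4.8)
The plan is to construct an explicit $G$-equivariant bijection between $\GKt$ and the space of ordered pairs of distinct points in $\GP$, exploiting the ambient linear-algebra picture already set up in the excerpt. The forward map is essentially the pair of asymptotic maps: send an inertial observer, i.e.\ a timelike geodesic $\gamma(\R)\subset\GH$, to the pair $(\past,\future)$ formed by the asymptotic past and future of $\gamma$. Concretely, writing the geodesic as $\gamma(\tau) = (\cosh\tau)x + (\sinh\tau)u$ for some $(x,u)\in\GK$ (this representation being unique once we insist $\gamma'$ points to the future), the image is $([x-u],[x+u])\in\GP\times\GP$. The first thing to check is that these two points are genuinely distinct: since $\eta(x-u,x+u) = \eta(x,x)-\eta(u,u) = 1-(-1) = 2 \neq 0$, the null lines $[x-u]$ and $[x+u]$ are not orthogonal, hence not equal. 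So the map lands in the space of \emph{ordered pairs of distinct points}, and it is manifestly $G$-equivariant because it is built from linear-algebraic operations that $G$ respects.

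Next I would prove the map is a bijection, and here the work is exactly the computation already sketched just before the proposition. Given distinct $[v]\neq[w]$ in $\GP$, I first observe that $\eta(v,w)\neq 0$: two distinct null lines in $\R^{4,1}$ whose representatives were orthogonal would span a degenerate or totally null plane, which the Lorentzian signature $(4,1)$ forbids for a $2$-plane containing two independent null directions unless they coincide; more simply, orthogonality of two null vectors in this signature forces proportionality. Rescaling, I may pick representatives with $v\in\pastcone$, $w\in\futurecone$, and $\eta(v,w)=2$; such representatives are unique given the ordering and the co-orientation by past/future cones. Then the linear system $x-u = v$, $x+u = w$, i.e.\ $x = \tfrac12(v+w)$ and $u = \tfrac12(w-v)$, produces a unique candidate, and one checks $\eta(x,x) = \tfrac14\eta(v+w,v+w) = \tfrac14(0 + 2\cdot 2 + 0) = 1$, $\eta(u,u) = \tfrac14(0 - 2\cdot 2 + 0) = -1$, and $\eta(x,u) = \tfrac14(\eta(w,w)-\eta(v,v)) = 0$, so $(x,u)\in\GK$; moreover $u_0>0$ since $w\in\futurecone$ and $v\in\pastcone$. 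This exhibits the inverse map, and uniqueness of the rescaled representatives plus uniqueness of the solved-for $(x,u)$ shows the two maps are mutually inverse. Equivariance of the inverse is automatic.

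Finally I would note that passing from $\GK$ to $\GKt$ is exactly quotienting by the proper-time translation along the worldline: the point $(x,u)$ and the whole geodesic $\gamma(\R)$ carry the same pair $([x-u],[x+u])$ precisely because $\past,\future$ are constant along a geodesic (the subspace $[x,u] = [\gamma(\tau),\gamma'(\tau)]$ is $\tau$-independent), so the forward map descends to $\GKt$ and remains injective there. Since both $\GKt$ and the space of ordered distinct pairs are homogeneous $G$-spaces and we have a $G$-equivariant bijection between them, it is automatically an isomorphism of $G$-spaces.

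The only genuinely delicate point — the rest being bookkeeping — is the claim that a distinct pair $[v]\neq[w]$ can always be normalized to representatives with $v$ past-pointing, $w$ future-pointing, and $\eta(v,w)=2$: this needs the sign of $\eta(v,w)$ to be controlled, which comes down to the elementary but essential fact that in Lorentzian signature two null vectors are orthogonal iff proportional, together with a check that after fixing $v\in\pastcone$ the partner $w$ with $[w]$ given and $\eta(v,w)>0$ necessarily lies in $\futurecone$ rather than $\pastcone$. Once that sign discussion is pinned down, the isomorphism and its equivariance fall out immediately.
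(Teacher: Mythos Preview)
Your proposal is correct and follows essentially the same route as the paper: both construct the bijection via the asymptotic past/future maps $(x,u)\mapsto([x-u],[x+u])$, invert by normalizing representatives to $v\in\pastcone$, $w\in\futurecone$ with $\eta(v,w)=2$ and solving the linear system $x-u=v$, $x+u=w$, and invoke $G$-equivariance throughout. Your write-up is in fact more explicit than the paper's, carrying out the norm checks for $(x,u)\in\GK$ and spelling out the signature argument that two orthogonal null vectors in $\R^{4,1}$ must be proportional, where the paper simply asserts the normalization can always be arranged.
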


Picking an ordered pair of points in $\GP$, and calling them `$0$' and `$\infty$', is the same as giving $\GP-\{\infty\}$ the structure of a conformal vector space, or a Euclidean vector space up to scale.  From  the previous section we know the stabilizer of `$\infty$' is isomorphic to $\SIM(3)$, the group of transformations of a scale-free Euclidean affine space.  Specifying the origin `$0$' then makes this affine space a scale-free Euclidean vector space, reducing the $\SIM(3)$ symmetry to $\Go= K\times \R$.  The orbits of the $K\cong \SO(3)$ part are 2-spheres centered at the origin, while the the $\R$ part acts as dilations, moving between these concentric spheres:  
\begin{center}
\begin{tikzpicture}
    \node[anchor=south west,inner sep=0] (image) at (0,0) {\includegraphics[width=4cm]{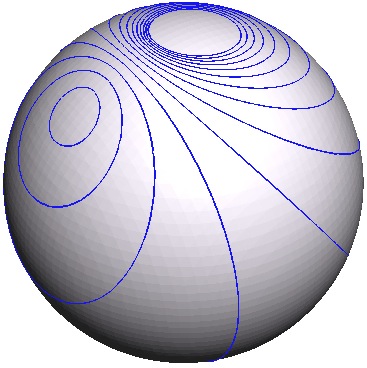}};
    \begin{scope}[x={(image.south east)},y={(image.north west)}]
    \node [red,inner sep=.3mm] (infinity) at  (0.53,0.92) {$\bullet$};
    \node at (.6,1.1) {`$\infty$'} edge [black,very thin]  (infinity);
    \node [vsmall] at (.95,1.05) {\sf (asymptotic future)};
    \node [red,inner sep=.3mm] (origin) at  (.2,.69) {$\bullet$};
    \node at (-.2,.8) {`$0$'}edge [black,very thin]  (origin);
    \node [vsmall] at (-.25,.7) {\sf (asymptotic past)};
    \node at (-.3,.25) {$S^3$} edge [black,very thin] (.15,.35);
    \node [small,text width=2cm] at (1.5,.4) {\sf equally spaced concentric copies of $S^2$}
         edge [black,very thin] (.95,.57)
         edge [black,very thin] (.82,.45)
         edge [black,very thin] (.67,.14);
    \end{scope}
\end{tikzpicture}
\end{center}
This is just a picture of an {\em inertial observer} from the holographic perspective.  

From here, it is easy to get to the observer space.  From the spacetime perspective, the flow of time for an inertial observer is given by the action of the $\R$ part of $\Go$.  Since we have an isomorphism as $G$-spaces, this must be the case in the holographic picture as well: {\em an inertial observer's timeline is the action of $\R$ as dilations on the corresponding conformal vector space}.  Using Klein geometry, we need only pick some feature in inertial observer space that is stabilized by $K$, so that the homogeneous space of such features becomes $G/K$.  The obvious choice is one of the copies of $S^2$ determined by an inertial observer.   We immediately get:
\begin{prop}
The observer space $\GK$ of de Sitter spacetime is isomorphic as a $G$-space to the space of ordered pairs of distinct points in $\GP$ together with a fixed sphere centered at the origin of the resulting conformal vector space. 
\end{prop}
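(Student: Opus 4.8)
The plan is to leverage Proposition~\ref{prop:inertial} together with the Klein-geometric fact that $\GK \cong G/K$, where $K \cong \SO(3)$ is the stabilizer of the basepoint $(x,u)$ of (\ref{basepoint}). Write $X$ for the space of triples in the statement. Forgetting the chosen sphere gives a $G$-equivariant surjection from $X$ onto the space of ordered pairs of distinct points in $\GP$, and by Proposition~\ref{prop:inertial} that target is $\GKt \cong G/\Kt$ with $\Kt \cong \SO(3)\times\R$. So it suffices to understand the fibers of this map and the action of $\Kt$ on them. Fix the base inertial observer $\gamma(\R)$ of (\ref{geodesic}): its asymptotic past $[x-u]$ and future $[x+u]$ play the roles of $0$ and $\infty$, making $\GP\setminus\{[x+u]\}$ a conformal vector space with origin $[x-u]$. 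By the factorization (\ref{factorize}), the stabilizer $\Kt$ acts on this conformal vector space with the $\SO(3)=K$ factor rotating about the origin and the $\R$ factor acting by dilations.

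Next I would pin down the fiber. Declare a \emph{sphere centered at the origin} to be a nonempty $K$-orbit in $\GP\setminus\{[x+u]\}$ other than the fixed point $[x-u]$; equivalently, a level set of the conformal distance to the origin, which is well defined up to one overall positive scale factor. The set $\Sigma$ of these spheres is a $1$-manifold, and the residual dilation group $\R\subset\Kt$ acts on $\Sigma$ \emph{simply transitively}: any two concentric spheres are carried to one another by a unique dilation, and a nontrivial dilation fixes no concentric sphere. Thus $\Sigma$ is a principal homogeneous $\R$-space, and the $\Kt$-action on it is transitive with point-stabilizer exactly $K$.

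With this in hand the identification is immediate. Choosing a base sphere $\Sigma_0\in\Sigma$ gives a basepoint of $X$ whose stabilizer in $G$ is $\{g\in\Kt : g\Sigma_0=\Sigma_0\}=K$; and $G$ acts transitively on $X$, since it acts transitively on ordered pairs (Proposition~\ref{prop:inertial}) and the residual $\Kt$-action on each fiber is transitive. Hence $X\cong G/K$ as a $G$-space, and since $\GK\cong G/K$ for the same subgroup $K\hookrightarrow\Kt$ that induces $\GK\to\GKt$, we obtain a $G$-equivariant isomorphism $\GK\cong X$ compatible with the projections to $\GKt$. Unwinding it, the isomorphism sends $(x,u)$ to $\bigl(\past(x,u),\future(x,u),\Sigma_{(x,u)}\bigr)$, where $\Sigma_{(x,u)}\subset\GP$ is the $2$-sphere obtained as the asymptotic past of the light cone of the event $\gamma(0)=x$; one checks from (\ref{observer-field})--(\ref{vf}) and the proper-time flow (\ref{geodesic}) that this is a concentric sphere, and that proper-time translation of observers corresponds precisely to the dilation action on $\Sigma$, so that the map is a genuine bijection and not merely an abstract isomorphism of copies of $G/K$.

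I expect the substance of the argument to be concentrated in the second paragraph: the passage from $\GKt$ to $\GK$ via Klein geometry is routine bookkeeping once the stabilizer of a triple is known to be $K$, but one must take care to formulate ``sphere centered at the origin'' intrinsically in a conformal (rather than metric) vector space and to verify that such spheres form an $\R$-torsor under dilations. A secondary point, needed to make the statement geometrically meaningful, is matching the concentric-sphere coordinate on the fiber with the observer's proper-time coordinate, i.e.\ relating the $\R$-factor of $\Kt$ as it appears in (\ref{factorize}) (dilations) to its appearance in (\ref{geodesic}) (proper-time translation) and identifying $\Sigma_{(x,u)}$ with the limit of the current light cone.
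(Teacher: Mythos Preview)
Your proposal is correct and follows essentially the same Klein-geometric route as the paper: use Proposition~\ref{prop:inertial} to identify ordered pairs of points with $G/\Kt$, then observe that a concentric sphere is a feature whose stabilizer in $\Kt\cong K\times\R$ is exactly $K$, so the space of triples is $G/K\cong\GK$. The paper's own argument is extremely terse (``we need only pick some feature in inertial observer space that is stabilized by $K$ \ldots\ We immediately get''), whereas you supply the careful verification that the concentric spheres form an $\R$-torsor under dilations and go further by explicitly matching the fiber coordinate with proper time and identifying the sphere $\Sigma_{(x,u)}$ as the asymptotic past of the light cone at $x$---material the paper states only \emph{after} the proposition.
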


We sketched in the introduction how this corresponds to the spacetime picture: the region inside the chosen sphere represents the portion of the infinite past that is causally connected with the observer at the observer's current time, so that the expansion of this sphere is the measure of time along the observer's worldline.  We can now see more precisely how this works.  In de Sitter spacetime, the light cone at $x\in\GH$ is $(x+[x]^\perp)\cap \GH$.  The affine plane $x+[x]^\perp$ intersects every light ray in the ambient past light cone $\pastcone$ on one side of $[x]^\perp$.  That is, the asymptotic past of the light cone is the sphere in $\GP$  consisting of all light rays in the subspace $[x]^\perp$: 
\[
    \{ [v] : v\in [x]^\perp, \eta(v,v)=0\}\cong S^2
\] 
which we declare to be the \define{unit sphere}.  

Notice that the unit sphere does not depend on the observer's velocity in spacetime, but only on the event.  The unit sphere by itself does not determine the event, but it does so once we also specify a co-orientation, i.e.\ a notion of `inside' versus `outside' of the sphere.  We thus have:
\begin{prop}
De Sitter spacetime $\GH$ is isomorphic as a $G$-space to the space of co-oriented 2-spheres in $\GP$.
\end{prop}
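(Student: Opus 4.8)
The plan is to exhibit an explicit $G$-equivariant bijection between $\GH$ and the set of co-oriented 2-spheres in $\GP$. Here by a \emph{2-sphere} in $\GP$ I mean a subset of the form $P([y]^\perp\cap\cone)$ for a spacelike line $[y]\subset\R^{4,1}$ --- the conformally invariant notion of a round 2-sphere. One checks directly that a $4$-plane $W\subset\R^{4,1}$ meets $\cone$ in more than the origin precisely when $\eta|_W$ has signature $(3,1)$, in which case $W^\perp=[y]$ is spacelike, $W=[y]^\perp$, and $P(W\cap\cone)\cong S^2$. Thus 2-spheres in $\GP$ are in bijection with spacelike lines in $\R^{4,1}$ via $[y]\mapsto P([y]^\perp\cap\cone)$; this is injective because a Lorentzian hyperplane is spanned by its null vectors, so the sphere determines $[y]^\perp$ and hence $[y]=([y]^\perp)^\perp$.

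First I would define the map $\Phi\maps\GH\to\{\text{co-oriented 2-spheres in }\GP\}$. Given $x\in\GH$, the hyperplane $[x]^\perp$ has signature $(3,1)$ since $x$ is spacelike, so $S_x:=P([x]^\perp\cap\cone)=\{[v]\in\GP:\eta(x,v)=0\}$ is a 2-sphere --- in fact the asymptotic past of the light cone at $x$ (the ``unit sphere'' discussed above). To co-orient it, recall that the canonical map $\pastcone\to\GP$ equips every $[v]\in\GP$ with a representative $v\in\pastcone$, unique up to positive scalar, so that for $[v]\notin S_x$ the sign of $\eta(x,v)$ is well defined. I would then verify --- most easily in the frame with $\eta=\mathrm{diag}(-1,1,1,1,1)$ and $x=(0,0,0,0,1)$, where $\eta(x,v)=v^4$ --- that $\{[v]:\eta(x,v)>0\}$ and $\{[v]:\eta(x,v)<0\}$ are exactly the two connected components of $\GP\setminus S_x$, each an open $3$-ball. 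Declaring the first to be the `inside' of $S_x$ (the part of the asymptotic past causally connected to $x$) makes $\Phi(x):=(S_x,\text{this co-orientation})$ a well-defined co-oriented 2-sphere, depending smoothly on $x$.

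Injectivity of $\Phi$ is then immediate: $S_x$ recovers $[x]^\perp$, hence $[x]=([x]^\perp)^\perp$, and the co-orientation picks $x$ over $-x$ since $\eta(-x,v)=-\eta(x,v)$. For surjectivity, take an arbitrary co-oriented 2-sphere $S=P([y]^\perp\cap\cone)$; rescaling $y$ so that $\eta(y,y)=1$ puts $\{y,-y\}\subset\GH$, and exactly one of these two points induces the prescribed co-orientation, providing the required preimage. Equivariance holds because $G$ preserves $\eta$ and hence $\GH$, $\pastcone$, $\GP$, the class of spacelike lines, and the pairing $\eta(x,v)$, so $\Phi(g\cdot x)=g\cdot\Phi(x)$ including co-orientations. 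An equivariant bijection between homogeneous $G$-spaces is automatically an isomorphism of $G$-spaces (one may also observe that $\Phi$ and $\Phi^{-1}$ are manifestly smooth), which completes the proof.

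I expect the one step with genuine content to be the construction of the co-orientation: pinning down precisely which subsets of $\GP$ count as `2-spheres', and checking that the sign of $\eta(x,\,\cdot\,)$ on past-cone representatives really does label the two complementary components of $\GP\setminus S_x$, continuously in $x$. Once the objects are recast as spacelike lines in $\R^{4,1}$ carrying a unit vector, the remainder is forced linear algebra.
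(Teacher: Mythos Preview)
Your proposal is correct and follows essentially the same approach as the paper: the paper's argument is the informal observation, stated just before the proposition, that the ``unit sphere'' $\{[v]: v\in [x]^\perp,\ \eta(v,v)=0\}$ depends only on the event $x$, and that adding a co-orientation (inside versus outside) recovers $x$ from $-x$. You have simply made this sketch rigorous by explicitly identifying 2-spheres in $\GP$ with spacelike lines in $\R^{4,1}$, writing down the map $\Phi$ and its co-orientation via the sign of $\eta(x,v)$ on past-cone representatives, and checking bijectivity and equivariance --- all of which the paper leaves implicit.
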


To summarize the correspondence between spacetime and holographic pictures, we can draw all of the equivariant maps, but describing all of the spaces alternatively in relation to spacetime or in relation to conformal space.  These are simply two geometric representations of the same `observer space' Klein geometry $G/K$:
\[
  \begin{tikzpicture}[scale=2.3,small,>=stealth',node distance=.7em,baseline=(current bounding box.north)]
      \node (title) at (0,1.2) {\sf \large Spacetime picture:};
      \node (inertial) at (2.2,.3) {$\GKt$};
      \node (spacetime) at (0,-.5) {$\GH$};
      \node (hyper) at (.7,-.5) {$\Hyp$};
      \node (conf) at (2.2,-1.2) {$\GP$};
      \node (pastcone) at (1.4,-.5) {$\pastcone$};
      \node (obs) at (1,1) {$\GK$};
      \node [below of=obs,node distance=1.1em,vsmall,color=blue,text centered, text width=2.2cm] {\sf space of unit future timelike vectors}
          edge [->]  node [near end] (base-event) {} (spacetime)
          edge [->] node [right,pos=.6] (asym) {$\past$} (conf)
          edge [->]  node (extend) {} (inertial)
          edge [->]  node (base-now) {}(hyper)
          edge [->]  node (pastconemap) {} (pastcone);
      \node (stlabel) [below of=spacetime,vsmall,color=blue,text centered,text width=1.4cm] {\sf spacetime};
      \node [below of=hyper,node distance=1.1em,vsmall,color=blue,text centered,text width=1.5cm] {\sf space of `nows'};
      \node [below of=conf,vsmall,color=blue] {\sf past boundary};
      \node [below of=inertial,node distance=1.1em,vsmall,color=blue,text centered,text width=2cm] {\sf space of timelike geodesics}
           edge [->] (conf);
      \node [below of=pastcone,node distance=.2em,vsmall,color=blue,text centered,text width=1.8cm] {} 
          edge [->] node [near end] (canon) {} (conf);          
      \begin{pgfonlayer}{background} 
           \filldraw [fill=red!10,draw=black!5]
              (stlabel.south -| stlabel.west) rectangle (spacetime.north -| stlabel.east);
      \end{pgfonlayer}        
  \end{tikzpicture}
\qquad
  \begin{tikzpicture}[scale=2.3,small,>=stealth',node distance=.7em,baseline=(current bounding box.north)]
      \node (title) at (0,1.2) {\sf \large Holographic picture:};
      \node (inertial) at (2.2,.3) {$\GKt$};
      \node (spacetime) at (0,-.5) {$\GH$};
      \node (hyper) at (.7,-.5) {$\Hyp$};
      \node (conf) at (2.2,-1.2) {$\GP$};
      \node (pastcone) at (1.4,-.5) {$\pastcone$};
      \node (obs) at (1,1) {$\GK$};
      \node [below of=obs,node distance=1.2em,vsmall,color=blue,text centered, text width=2.2cm] {\sf space of Euclidean \mbox{decompactifications}}
          edge [->]  node [near end] (base-event) {} (spacetime)
          edge [->] node [right,pos=.6] (asym) {$\past$} (conf)
          edge [->]  node (extend) {} (inertial)
          edge [->]  node (base-now) {}(hyper)
          edge [->]  node (pastconemap) {} (pastcone);
      \node [below of=spacetime,node distance=1.1em,vsmall,color=blue,text centered,text width=1.8cm] {\sf space of co-oriented spheres};
      \node [below of=hyper,node distance=1.1em,vsmall,color=blue,text centered,text width=1.5cm]  {}; 
      \node (conflabel) [below of=conf,vsmall,color=blue] {\sf conformal 3-sphere};
      \node [below of=inertial,node distance=1.1em,vsmall,color=blue,text centered,text width=1.8cm] {\sf ordered pairs of distinct points}
           edge [->] (conf);
      \node [below of=pastcone,node distance=1.1em,vsmall,color=blue,text centered,text width=1.8cm] {\sf tautological bundle}
          edge [->] node [near end] (canon) {} (conf);          
      \begin{pgfonlayer}{background} 
           \filldraw [fill=red!10,draw=black!5]
              (conflabel.south -| conflabel.west) rectangle (conf.north -| conflabel.east);
      \end{pgfonlayer}        
  \end{tikzpicture}
\]

One disadvantage of our construction so far of the observer space $\GK$ from conformal space is that it is rather `nonlocal,' involving a choice of  two points as well as a 2-sphere between them to set the scale.  This is quite natural for the conformal 3-sphere, but does not readily generalize to arbitrary conformal manifolds, as we will do shortly, using differential methods.  
Fortunately, there is an equivalent way to view an observer, involving only differential data at a single point of the conformal 3-sphere. 

A \define{transverse 3-plane} in $\pastcone$ is a 3-dimensional subspace of some tangent space $T_v\pastcone$ transverse to the fiber direction of the bundle (\ref{pastlightcone}).   Let us show that a transverse 3-plane in $\GP$ is equivalent to an observer in $\GH$.  Note that $T_v\pastcone\cong [v]^\perp$ so we can think of a transverse 3-plane instead as a 3d subspace $W\subset [v]^\perp\subset \R^{4,1}$ such that $[v]^\perp=W\oplus [v]$, or equivalently a spacelike subspace of $[v]^\perp$.  Then $W^\perp\subset \R^{4,1}$ is a 2-dimensional Lorentzian subspace whose projective cone consists of $[v]$ and one other lightlike subspace $[w]$.  The ordered pair $([v],[w])$ in $\GP$ gives an inertial observer via Prop.~\ref{prop:inertial}.  Since we also have the specific basepoint $v\in \pastcone$, we may assume $[w]$ is represented by $w\in\futurecone$ with $\eta(v,w) = 2$, as we have done once before, and solve (\ref{linsys}) for an observer $(x,u)\in\GK$.  Conversely, an observer $(x,u)\in \GK$ determines null vectors $v,w$ by (\ref{linsys}), and $[v,w]^\perp=[x,u]^\perp$ is a spacelike subspace of $[v]^\perp\cong T_v\pastcone$.  These processes are inverse and everything is $G$-equivariant.  Recalling that the bundle (\ref{pastlightcone}) is isomorphic to the tautological bundle of the conformal 3-sphere,  we can neatly summarize these observations.

\begin{prop}
The observer space of de Sitter spacetime is isomorphic as a $G$-space to the space of all transverse 3-planes in the tautological bundle over the conformal 3-sphere. 
\end{prop}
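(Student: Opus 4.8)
Since the paragraph preceding the statement already contains the essential geometric argument, the plan is to organize it into a clean bijection and to verify $G$-equivariance carefully. Recall from Section~3 that the tautological bundle of the conformal 3-sphere is canonically the ambient past light cone bundle $\pastcone\to\GP$ of (\ref{pastlightcone}), and that $T_v\pastcone\cong[v]^\perp$ by translation along $v$, with the fiber direction corresponding to $[v]\subset[v]^\perp$. Since $v$ is null, $[v]$ is the radical of $\eta|_{[v]^\perp}$ and $\eta$ descends to a positive-definite form on $[v]^\perp/[v]$; hence a 3-plane $W\subset[v]^\perp$ is transverse to the fiber (i.e.\ $[v]^\perp=W\oplus[v]$) precisely when $\eta|_W$ is positive definite, that is, when $W$ is spacelike. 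So the task reduces to a $G$-equivariant bijection between pairs $(v,W)$, with $v\in\pastcone$ and $W\subset[v]^\perp$ a spacelike 3-plane, and points $(x,u)\in\GK$.

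First I would describe the map from transverse 3-planes to observers. Given $(v,W)$ as above, $W^\perp\subset\R^{4,1}$ is a $2$-dimensional Lorentzian subspace (signature $(1,1)$) containing $v$, so its projective null cone is $\{[v],[w]\}$ for a unique second null line $[w]$; I normalize its representative by $w\in\futurecone$, $\eta(v,w)=2$, which is automatic once $\eta(v,w)>0$ since $v$ is past-pointing. Solving the linear system (\ref{linsys}) gives $(x,u)=\bigl(\tfrac12(v+w),\tfrac12(w-v)\bigr)$, and the identities $\eta(v,v)=\eta(w,w)=0$, $\eta(v,w)=2$ yield $\eta(x,x)=1$, $\eta(u,u)=-1$, $\eta(x,u)=0$, while $u=\tfrac12(w-v)$ is a sum of two non-parallel future-pointing null vectors, hence future-timelike; so $(x,u)\in\GK$. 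For the inverse, send $(x,u)\in\GK$ to $v:=x-u\in\pastcone$, $w:=x+u\in\futurecone$ (null with $\eta(v,w)=2$, exactly as used for Prop.~\ref{prop:inertial}) together with the spacelike 3-plane $W:=[v,w]^\perp=[x,u]^\perp\subset[v]^\perp$.

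It remains to check that these are mutually inverse and equivariant. Starting from $(v,W)$: since $v\perp W$ and $w\perp W$ we get $[v,w]\subseteq W^\perp$, and both subspaces are $2$-dimensional, so $[v,w]=W^\perp$; thus $[w]$ is exactly the ``second null line'' extracted from $W$, the normalization $\eta(v,w)=2$ recovers the same $w$, and $[v,w]^\perp=W$, so we return to $(v,W)$. The other composite is the trivial computation $\tfrac12((x-u)+(x+u))=x$, $\tfrac12((x+u)-(x-u))=u$. Finally, $G=\SO_o(4,1)$ acts linearly preserving $\eta$, hence preserves $\pastcone$, its tangent bundle, transversality, orthogonal complements, null lines, the condition $\eta(v,w)=2$, and the system (\ref{linsys}), so both maps are $G$-equivariant; combined with the identification of the tautological bundle with $\pastcone\to\GP$ this proves the claim. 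No step is a genuine obstacle; the only points deserving care are the dimension-count identity $[v,w]=W^\perp$ that makes the two maps inverse, and the orientation bookkeeping guaranteeing $u\in\Hyp$ rather than a past-timelike vector and that $[w]$ has a unique representative with $\eta(v,w)=2$.
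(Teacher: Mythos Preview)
Your proposal is correct and follows exactly the same route as the paper: the argument the paper gives in the paragraph immediately preceding the proposition is precisely the bijection $(v,W)\leftrightarrow(x,u)$ via $W^\perp=[v,w]$ and (\ref{linsys}) that you have written out, and you have simply supplied the routine verifications (spacelike $\Leftrightarrow$ transverse, the dimension count $[v,w]=W^\perp$, the orientation checks, and $G$-equivariance) that the paper leaves implicit.
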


It is worth noting that a transverse 3-plane is the same as the derivative of a section of the bundle at a point, and that a section of the tautological bundle is just a metric in the conformal class.  So, this proposition lets us think of an observer as a `metric up to first order' at a point in conformal space, or in other words the metric at the point, together with the first derivative of the metic.

This description of the observer space $\GK\cong G/K$ is not only local, but is also easily applied to any conformal manifold, since it makes no use of special features of the conformally flat sphere:
\begin{defn}
\label{observer.space.of.conformal.space}
Given an arbitrary 3-manifold with conformal metric, we define its associated \define{observer space} to be the space of transverse 3-planes in its tautological bundle.   
\end{defn}   
This definition is the `holographic' analog of the definition of observer space associated to a spacetime as the unit future tangent bundle; it will be more fully justified in the next section, where we show that it is an `observer space geometry' in the sense defined in \cite{lifting}.   However, before moving on to this goal,  first we make a couple of corollary observations regarding inertial observers.

Recall that an {\em inertial} observer in $\GP$ is determined simply by an ordered pair of points $[v],[w]$ in $\GP$, and these give a subspace $W=[v,w]^\perp$.  Without a specific representative $v\in \pastcone$ of $[v]$, we cannot naturally view $[v]^\perp$ as a particular tangent space to the past light cone.  However, we can view it as simultaneously representing {\em all} tangent spaces to points of $\pastcone$ in the fiber over $[v]$.  In this way, $W$ determines a 3-dimensional distribution on the fiber, and this distribution is invariant under translation along the fiber.  

\begin{prop}
The inertial observer space of $\GH$ is isomorphic as a $G$-space to the space of $\R$-invariant distributions of transverse 3-planes supported on a single fiber of the tautological bundle.  
\end{prop}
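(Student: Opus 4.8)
The plan is to deduce this from the preceding proposition, which identifies $\GK$ with the space of transverse $3$-planes in the tautological bundle over $\GP$, by passing to the $\R$-quotient that implements the flow of time. Concretely, recall that an observer $(x,u)\in\GK$ determines $v=x-u\in\pastcone$ and $w=x+u\in\futurecone$ with $\eta(v,w)=2$, hence the spacelike $3$-plane $W=[v,w]^\perp=[x,u]^\perp\subset[v]^\perp\cong T_v\pastcone$. First I would compute how the time flow acts: along the geodesic (\ref{geodesic}) one has $\gamma(\tau)-\dot\gamma(\tau)=e^{-\tau}(x-u)$ and $\gamma(\tau)+\dot\gamma(\tau)=e^{\tau}(x+u)$, so the observer $(\gamma(\tau),\dot\gamma(\tau))$ has the same $W$ but is based at the point $e^{-\tau}v$ of the fiber over $[v]$. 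Thus under the identification $\GK\cong\{\text{transverse }3\text{-planes}\}$, the inertial observer through $(x,u)$ corresponds precisely to the family $\{\,W\subset[e^{-\tau}v]^\perp : \tau\in\R\,\}$ obtained by sliding the fixed subspace $W$ along the fiber over $[v]$.

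Next I would observe that such a family is exactly an $\R$-invariant distribution of transverse $3$-planes supported on the fiber over $[v]$: the $\R$-action on that fiber is by the dilations $v'\mapsto e^t v'$, whose derivative, under the canonical identifications $T_{v'}\pastcone\cong[v']^\perp=[e^t v']^\perp\cong T_{e^t v'}\pastcone$, is multiplication by $e^t$ and hence fixes every linear $3$-plane through the origin; so an $\R$-invariant distribution on the fiber over $[v]$ is the same datum as a single spacelike $3$-plane $W\subset[v]^\perp$. Conversely, given any $\R$-invariant distribution of transverse $3$-planes supported on a single fiber, it lies over some $[v]\in\GP$ and is determined by one spacelike $W\subset[v]^\perp$; then $W^\perp$ is a $(1,1)$-subspace containing the null vector $v$, so its projective light cone is $\{[v],[w]\}$ with $[w]\neq[v]$ uniquely determined, and the ordered pair $([v],[w])$ is an inertial observer by Prop.~\ref{prop:inertial}. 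These two assignments are mutually inverse, identifying $\GKt$ with the stated space.

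Finally I would check $G$-equivariance, which is immediate because every step uses only the linear $G$-action on $\R^{4,1}$ together with operations that commute with it: $g$ sends a spacelike $W\subset[v]^\perp$ to $gW\subset[gv]^\perp$, hence an $\R$-invariant distribution over $[v]$ to one over $g[v]$, and $g(W^\perp)=(gW)^\perp$ since $g$ is an isometry; on the other side $g$ sends $([v],[w])$ to $(g[v],g[w])$, and the two sides match. The step I expect to require the most care is the one identifying ``$\R$-invariant'' with ``arising from a single fixed subspace'': one must be precise that the $\R$-action on the tautological bundle restricts on each fiber to the dilation action, that it is this action that implements the time flow (which is where Prop.~\ref{prop:inertial} being a $G$-space isomorphism and the computation of $\gamma\mp\dot\gamma$ above are used), and that the canonical identifications of the tangent spaces $T_{v'}\pastcone$ at points $v'$ of a fixed fiber with the common subspace $[v]^\perp\subset\R^{4,1}$ make ``invariant distribution on a fiber'' genuinely finite-dimensional data — a single $3$-plane — rather than a plane chosen point by point.
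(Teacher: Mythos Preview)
Your proposal is correct and follows essentially the same route as the paper, which gives its argument in the paragraph immediately preceding the proposition rather than in a formal proof environment: an inertial observer corresponds to an ordered pair $([v],[w])$ by Prop.~\ref{prop:inertial}, hence to the spacelike subspace $W=[v,w]^\perp\subset[v]^\perp$, and since $[v]^\perp$ is canonically the tangent space at every point of the fiber over $[v]$, this $W$ determines an $\R$-invariant distribution along that fiber. Your version supplies more detail than the paper does---in particular the explicit computation $\gamma(\tau)\mp\dot\gamma(\tau)=e^{\mp\tau}(x\mp u)$ showing that the time flow on $\GK$ matches the dilation action on the fiber, and the careful check that $\R$-invariance on a fiber collapses to the datum of a single $3$-plane---but the underlying idea is the same.
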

If we smoothly pick such an invariant distribution on each fiber of (\ref{pastlightcone}), we have precisely the horizontal subspaces of an Ehresmann connection on the principal $\R$ bundle.   Let us define a \define{field of inertial observers on $\GP$} to be a section of the bundle $\iO \to \GP$.  We then have:

\begin{prop}
The space all fields of inertial observers on $\GP$ is isomorphic to the space of connections on the principal $\R$ bundle $\pastcone \! \to \GP$.  
\end{prop}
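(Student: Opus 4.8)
The plan is to prove this by unwinding the definitions, building directly on the two preceding propositions. A section of $\iO \to \GP$ is, by definition, a smoothly varying choice for each $[v]\in\GP$ of an inertial observer lying over $[v]$ (i.e.\ with asymptotic past $[v]$); and by the previous proposition such an inertial observer is the same datum as an $\R$-invariant distribution of transverse $3$-planes supported on the fiber of $\pastcone\to\GP$ over $[v]$. The one elementary observation that makes the whole thing go is that a transverse $3$-plane at a point $v\in\pastcone$ is precisely a horizontal subspace at $v$ for a principal connection: under the identification $T_v\pastcone\cong[v]^\perp$, the vertical subspace of the principal $\R$-bundle $\pastcone\to\GP$ is the tangent to the ray $\{e^t v\}$, which is the null line $[v]\subset[v]^\perp$, so a transverse $3$-plane (a complement to this line, by definition) is exactly a complement to the vertical; and since $\R$ is abelian, equivariance of a principal connection is just $\R$-invariance of the horizontal distribution.

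With this identification in place the bijection is immediate at the level of data. Given a connection on $\pastcone\to\GP$ — an $\R$-invariant distribution $v\mapsto H_v$ of transverse $3$-planes on all of $\pastcone$ — its restriction to the fiber over each $[v]\in\GP$ is an $\R$-invariant distribution of transverse $3$-planes supported on that fiber, hence an inertial observer over $[v]$, and this assignment is a section of $\iO\to\GP$. Conversely, a section of $\iO\to\GP$ provides such a fiberwise distribution for every $[v]$, and the union of these over all of $\GP$ is an $\R$-invariant distribution of transverse $3$-planes on $\pastcone$, i.e.\ a connection. The two assignments are visibly mutually inverse; moreover $G$ acts on both sides by pullback and every step is assembled from $G$-equivariant constructions, so the correspondence is $G$-equivariant, although the statement does not require this.

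The part that is not purely formal, and which I expect to be the main obstacle, is the smoothness bookkeeping: one must check that a section of $\iO\to\GP$ is smooth exactly when the corresponding distribution on $\pastcone$ is smooth, which amounts to knowing that the bijection of the previous proposition is not merely fiberwise but is an isomorphism of bundles over $\GP$. A clean way to handle this is to introduce the tautological map from the fiber product $\pastcone\times_{\GP}\iO$ to the bundle over $\pastcone$ whose fiber at $v$ is the set of transverse $3$-planes at $v$, sending $(v,o)$ to the $3$-plane that the distribution of the inertial observer $o$ assigns to $v$; this is a smooth bundle morphism over $\pastcone$, so pulling back a section $\sigma\maps\GP\to\iO$ along $\pastcone\to\GP$ and composing yields a smooth transverse distribution on $\pastcone$, while conversely restricting a smooth distribution to the fibers yields a smooth section of $\iO$. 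Once this is dispatched, the proposition follows.
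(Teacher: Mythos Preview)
Your proposal is correct and follows the same route as the paper: both derive the statement directly from the preceding proposition by observing that a smooth choice of $\R$-invariant transverse $3$-plane distribution on each fiber is precisely an Ehresmann connection on the principal $\R$ bundle. The paper in fact offers no separate proof---it treats the proposition as an immediate corollary, the entire argument being the sentence ``If we smoothly pick such an invariant distribution on each fiber \ldots\ we have precisely the horizontal subspaces of an Ehresmann connection''; your version simply makes explicit the ingredients (vertical $=[v]$, transverse $=$ complementary, abelian $\Rightarrow$ equivariance $=$ invariance) and the smoothness bookkeeping that the paper leaves implicit.
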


\section{The observer space of a general M\"obius geometry}

Observer space geometries are introduced in \cite{lifting} as deformations of the observer space of de Sitter spacetime, or one of the other homogeneous spacetimes.  The idea is a straightforward application of {\em Cartan geometry} \cite{Sharpe}.  Namely, since any Klein geometry gives a type of Cartan geometry modeled on it, we simply use the Klein geometry of de Sitter observer space as a model.
\begin{defn}
An \define{observer space geometry} is a Cartan geometry modeled on $G/K$, where $G=\SO_o(4,1)$ is  the connected de Sitter group and $K\cong \SO(3)$ is the group of spatial rotations around a fixed observer in de Sitter spacetime. 
\end{defn}

Roughly, a Cartan geometry is a manifold with the same `infinitesimal' geometry as its model Klein geomety, but without the same rigid uniformity on the macroscopic level.  More precisely, a \define{Cartan geometry} on a manifold $M$, with model $G/G'$, consists of a principal $G'$ bundle $Q$ over $M$ equipped with a $\g$-valued 1-form $A$, the {\bf Cartan connection}, satisfying three properties:
\begin{C-list}  
\item \label{cartan:nondeg} For each $q\in Q$, $A_q\maps T_q Q\to \g$ is a linear isomorphism;
\item \label{cartan:equivariant} $(R_g)^\ast A = \Ad(g^{-1})\circ A \quad \forall g \in G'$;
\item \label{cartan:mc} $A$ restricts to the Maurer--Cartan form on vertical vectors. 
\end{C-list}

Besides observer space geometries, we need conformal Cartan geometry, which is more general than the a manifold with a conformal metric.  
\begin{defn}
A \define{M\"obius geometry} is a Cartan geometry modeled on $G/P$, where $G=\SO_o(4,1)$ and $P$ is a parabolic subgroup, stabilizing some point in the projective light cone $\GP$ in $\R^{4,1}$.  
\end{defn}
\begin{prop}
\label{conf-equiv-prob}
A Cartan geometry on $M$ modeled on $G/P$ induces a conformal structure on $M$.  Conversely, a conformal structure on $M$ is induced by a unique normal Cartan geometry on $M$ modeled on $G/P$.\end{prop}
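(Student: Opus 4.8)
The plan is to exhibit the correspondence in both directions, using the standard machinery for parabolic geometries of type $(G,P)$ with $|1|$-graded Lie algebra. First I would recall the grading $\g = \gm \oplus \go \oplus \gp$ associated to the parabolic $\p = \go \oplus \gp$, visible in the matrix form (\ref{gbar-element}): the $p$-part spans $\gm \cong \R^3$, the block-diagonal part $(\tau,b)$ spans $\go \cong \mathfrak{co}(3) \cong \so(3) \oplus \R$, and the $q$-part spans $\gp \cong (\R^3)^\ast$. Given a M\"obius geometry $(Q\to M, A)$, I would decompose $A = A_{-1} + A_0 + A_{+1}$ according to this grading. Property (\ref{cartan:nondeg}) together with (\ref{cartan:mc}) shows that $A_{-1}$ descends to a well-defined $\gm$-valued, $\R^3$-valued (soldering) form on $M$ twisted by the $\Go$-representation on $\gm$; since $\Go$ acts on $\gm \cong \R^3$ by conformal transformations (rotations times dilations), the quadratic form on $\gm$ is well-defined up to scale, and pulling it back via $A_{-1}$ gives a conformal class of metrics on $M$. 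One must check this is independent of the choices: equivariance (\ref{cartan:equivariant}) is exactly what guarantees that the $P$-ambiguity in lifting tangent vectors to $Q$ only rescales the metric. That establishes the first sentence.

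For the converse, I would appeal to the general prolongation/normalization theorem for parabolic geometries (Tanaka, \v{C}ap--Schichl; see also \cite{Sharpe} for the conformal case), which I may invoke as a cited result. The strategy is: start from a conformal structure on $M$; this is precisely a reduction of the frame bundle to structure group $\Go = \mathrm{CO}(3)$, i.e.\ a \emph{first-order $G$-structure} of type $\Go \hookrightarrow \GL(3,\R)$. The model $G/P$ has $\gp \neq 0$, so the Cartan bundle $Q$ is not the $\Go$-frame bundle itself but a prolongation: one builds $Q$ as an affine bundle over the $\Go$-structure with fiber modeled on $\gp \cong (\R^3)^\ast$, parametrizing the possible "Weyl structures" or choices of torsion-free connection in the conformal class. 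On $Q$ one has a tautological partial connection form, and the remaining freedom is fixed by imposing the \emph{normalization condition} $\partial^\ast \kappa = 0$ on the curvature $\kappa = dA + \tfrac12[A,A]$, where $\partial^\ast$ is the Kostant codifferential of the Lie-algebra cohomology complex computing $H^\ast(\gm,\g)$. The key algebraic fact—which I would state and attribute rather than prove—is that in the $|1|$-graded case $\partial^\ast$ acts injectively on the relevant curvature components after the torsion is normalized to zero (the harmonic curvature here being the Weyl/Cotton tensor), so the normalization determines $A$ uniquely and the construction is functorial; hence existence and uniqueness of the normal Cartan geometry.

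The main obstacle, and the step I would treat most carefully, is the \emph{uniqueness} half: showing that two normal Cartan geometries inducing the same conformal structure are isomorphic by a bundle map covering $\mathrm{id}_M$. This is where one genuinely needs the cohomological input. The argument runs: the difference of two Cartan connections with the same underlying $\gm$-component is a $\gp$-valued tensor $\Phi$ on $M$ (valued in an associated bundle), changing $A$ by $\Phi$ shifts the curvature by $\partial \Phi$ plus lower-order terms, and $\partial^\ast(\partial\Phi) = 0$ forces $\Phi$ to lie in the harmonic subspace; but in degrees where $H^1(\gm,\g)$ vanishes—which one checks via Kostant's theorem for the relevant weights in the $\so(4,1)$ case—this pins down $\Phi$ order by order along a Weyl-structure section, giving $\Phi = 0$ after a suitable gauge change in the $\Go$- and $\gp$-directions. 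I would organize this as: (i) fix a Weyl structure to trivialize the $\gp$-part, (ii) match the $\go$-parts using that a conformal structure has a unique Levi-Civita-type normal connection modulo the Weyl freedom, (iii) run the codifferential argument to kill the residual $\gp$-ambiguity. The existence half, by contrast, is a direct (if tedious) prolongation construction that I would sketch only schematically, since it is standard and its details are carried out in \cite{Sharpe}.
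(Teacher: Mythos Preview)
The paper does not actually give a proof of this proposition: immediately after stating it, the author writes ``This is proved in Sec.\ 7.3 of Sharpe's book \cite{Sharpe},'' and only adds the remark that in three dimensions normality reduces to the curvature taking values in $\gp$. So there is nothing in the paper to compare your argument against at the level of details; your proposal is already far more explicit than what the paper provides, and is in line with the standard treatment one finds in Sharpe or in the parabolic-geometry literature (Tanaka, \v{C}ap--Schichl) you cite.

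One small technical point worth tightening: the decomposition $A = A_{-1} + A_0 + A_{+1}$ you write down is not $P$-equivariant, since $P$ preserves only the filtration $\g \supset \p \supset \gp$, not the grading. What descends cleanly to $M$ is $A \bmod \p$, a $(\g/\p)$-valued form; the reason this yields a conformal class is that the $P$-action on $\g/\p$ factors through $\Go \cong \mathrm{CO}(3)$ (the nilpotent part $\exp(\gp)$ acts trivially on $\g/\p$ in the $|1|$-graded case). Your argument implicitly uses this when you say the $P$-ambiguity only rescales the metric, but it is worth saying so directly rather than via the grading splitting of $A$. Apart from that, your outline of both directions---soldering form for the forward direction, prolongation plus Kostant-codifferential normalization for the converse---is correct and is essentially the content of the reference the paper defers to.
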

This is proved in Sec.\  7.3 of Sharpe's book \cite{Sharpe}. The definition of `normal' can also be found there (Def. 7.2.7), though here we do not need it in full generality, thanks to a dimensional coincidence: a M\"obius geometry on a {\em three}-dimensional manifold is \define{normal} if and only if its curvature takes values in the $P$ module $\gp$.  

Our goal here is to obtain observer space geometries from M\"obius geometry, hence from any conformal 3-manifold.  But first, we discuss a few basic facts about observer space geometries.  

The de Sitter observer space $G/K$ is reductive, meaning that the quotient representation $\g/\k$ of $K$ can be embedded as a subrepresentation of $\g$, complementary to $\k$.  In fact, $\g$ splits into four irreducible representations of $K$ corresponding to four types of infinitesimal transformations of an observer:
\begin{equation}
\label{obs-splitting}
 \begin{tikzpicture}[scale=1,>=stealth',baseline=(current bounding box.center)]
     \node (eqn) at (0,0) {$\g =\k \oplus (\ghk \oplus \vec \ggh \oplus \ggh_o)$};
     \begin{scope}[x={(eqn.east)},y={(eqn.north)}]
       \node (rot) [circle,inner sep=.5mm] at (-.46,-.5) {};
       \node (boost) [circle,inner sep=.5mm] at (-.04,-.5) {};
       \node (strans) [circle,inner sep=.5mm] at (.36,-.5) {};
       \node (ttrans) [circle,inner sep=.5mm] at (.7,-.5) {};
       \node [small,text width=1.8cm,text centered,anchor=north] at (-1,-3) {\sf rotations \mbox{(stabilizer)}} edge [very thin] (rot);
       \node [small,anchor=north] at (-.2,-3) {\sf boosts} edge [very thin] (boost);
       \node [small,text width=1.8cm,text centered,anchor=north] at (.6,-3) {\sf spatial \mbox{translations}} edge [very thin] (strans);
       \node [small,text width=1.8cm,text centered,anchor=north] at (1.7,-3) {\sf time \mbox{translations}} edge [very thin] (ttrans);
     \end{scope}
 \end{tikzpicture}
\end{equation}
where the parenthesized terms, a $K$ representation isomorphic to $\g/\k$, make up the tangent space to de Sitter observer space $G/K$ at the basepoint stabilized by $K$.  The reductive splitting allows the Cartan connection to be split into four separate fields on observer space with distinct physical roles \cite{lifting}.  

On the other hand, the present context suggests a different decomposition of $\g$---one very often used in conformal geometry (see e.g.\ \cite{leitner}), but here reinterpreted geometrically in terms of observer space:
\begin{equation}
\label{conf-splitting}
 \begin{tikzpicture}[scale=1,>=stealth',baseline=(current bounding box.center)]
     \node (eqn) at (0,0) {$\g = \gm \oplus \go \oplus \gp$};
     \begin{scope}[x={(eqn.east)},y={(eqn.north)}]
       \node (go) [circle,inner sep=.5mm] at (.22,-.7) {};
       \node (gm) [circle,inner sep=.5mm] at (-.4,-.7) {};
       \node (gp) [circle,inner sep=.5mm] at (.7,-.7) {};
       \node [small,text width=1.8cm,text centered,anchor=north] at (.22,-3) {\sf rotations and time \mbox{translations}} edge [very thin] (go);
       \node [small,text width=2.5cm,text centered,anchor=north] at (-1.28,-3) {\sf \mbox{translations of} \mbox{asymptotic past}} edge [very thin] (gm);
       \node [small,text width=2.5cm,text centered,anchor=north] at (1.72,-3) {\sf \mbox{translations of} \mbox{asymptotic future}} edge [very thin] (gp);
     \end{scope}
 \end{tikzpicture}
\end{equation}
This is a $\Z$-grading of $\g$ (where grade $k$ is trivial for all $|k|>1$), and explains why we have called the stabilizer of an inertial observer $\Go$ all along.  We have defined $P$ to be the stabilizer of the asymptotic past of an observer, and its Lie algebra is $\p=\go\oplus \gp$.  The Lie algebras of $H$, $H'$, $\Go$  and $\Hpp$ are $\h = \k\oplus \ghk$, $\h'= \k \oplus \vec \ggh$, $\hpp= \k\oplus \gp$, and $\go=\k \oplus \ggh_o$.  To complete the relationship between (\ref{obs-splitting}) and (\ref{conf-splitting}) we note there is a canonical isomorphism of $K$ representations $f\maps \ghk \to \vec\ggh$, and this lets us define the $K$ representations $\gm$ and $\gp$ (in fact invariant under the larger group $\Go$) by
\[
 \gpm = \{ (\xi,\zeta) \in \ghk \oplus \vec\ggh : \zeta = \pm f(\xi)\}
\]
Geometrically, this says, for example, that in order to translate the asymptotic future while fixing the asymptotic past, one can translate and boost in the same spatial direction, with the same magnitude according to our normalization.

We now prove our main mathematical result, justifying our preliminary definition (Def.~\ref{observer.space.of.conformal.space}) of the observer space of a conformal manifold. 
\begin{thm}
\label{mainthm}
A M\"obius geometry canonically induces an observer space geometry.  Moreover, this geometry may be identified as the space of transverse 3-planes in the tautological $\R$ bundle corresponding to the conformal metric induced by the M\"obius geometry. 
\end{thm}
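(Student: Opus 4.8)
The plan is to exploit the $\Z$-grading $\g = \gm \oplus \go \oplus \gp$ of (\ref{conf-splitting}) together with the identification $K = \Go / \exp(\gm)$ ... no, more precisely $\Go$ is a quotient group of $P$ is wrong; rather $K \subset \Go \subset P$ and $P \cong \Go \ltimes \exp(\gp)$ from the factorization (\ref{factorize}). Starting from a M\"obius geometry $(Q \to M, A)$ with $Q$ a principal $P$-bundle, the strategy is to produce the total space $\O = Q/K$ of the associated bundle with fiber $P/K$, show it carries a canonical principal $K$-bundle structure (with total space $Q$ itself, since $K \subset P$), and equip $Q \to \O$ with a Cartan connection modeled on $G/K$. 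The Cartan connection is simply the same $\g$-valued $1$-form $A$ already present on $Q$: I would verify directly that $A$ satisfies the three Cartan axioms C1--C3 with respect to the smaller structure group $K$. Axioms C1 (pointwise linear isomorphism $T_qQ \to \g$) and C2 (equivariance, now only required for $g \in K \subset P$) are inherited immediately from the M\"obius Cartan connection since they were required for all of $P$. Axiom C3 — that $A$ restricts to the Maurer--Cartan form on vectors tangent to the $K$-orbits — also follows, because the $K$-orbits in $Q$ are contained in the $P$-orbits, so this is just the restriction of the corresponding M\"obius statement to the subalgebra $\k \subset \p$. Thus the first sentence of the theorem is essentially a formal consequence: \emph{any} Cartan geometry modeled on $G/P$ with $K \subset P$ reduces to one modeled on $G/K$, by taking $\O = Q/K$.

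The substantive content is the second sentence: the identification of $\O = Q/K$ with the space of transverse $3$-planes in the tautological $\R$-bundle. Here I would use Proposition~\ref{conf-equiv-prob}: the M\"obius geometry induces a conformal structure on $M$, hence a tautological bundle $\cone_M \to M$ (the analog of $\pastcone \to \GP$), whose fiber over $x \in M$ is the ray of metrics in the conformal class at $x$, and whose total space is a principal $\R$-bundle. A point of $\cone_M$ is a choice of scale, i.e.\ (via the soldering part $\theta$ of the Cartan connection, the $\gm$-component of $A$ restricted appropriately) an identification of a fiber of the associated $\gm$-bundle. The key is to match the fiber $P/K$ of $\O \to M$ with the space of transverse $3$-planes over a single fiber of $\cone_M$. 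Using the factorization $P \cong (K \times \R) \ltimes \exp(\gp) \cong \Go \ltimes \exp(\gp)$, the quotient $P/K$ is diffeomorphic to $\R \times \exp(\gp) \cong \R \times \gp$; the $\R$ factor is the scale (the position along the tautological fiber, matching (\ref{FRWtime})), and the $\gp \cong \gm^*$ factor, identified with a $3$-dimensional complement to the fiber direction inside the relevant tangent space, is exactly the data of a transverse $3$-plane. This is precisely the local model established in the Proposition preceding Definition~\ref{observer.space.of.conformal.space}, which identified de Sitter observer space with transverse $3$-planes in the tautological bundle over $\GP$; I would argue that this identification is natural in the Cartan-geometric sense (built only from the grading data and the Cartan connection), hence extends from the flat model $Q = G$, $M = \GP$ to an arbitrary M\"obius geometry by the general principle that associated-bundle constructions from the model transfer to Cartan geometries modeled on it.

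The main obstacle I anticipate is making the last step genuinely canonical rather than merely pointwise. Fiberwise, $P/K \cong \{\text{transverse 3-planes over one tautological fiber}\}$ is just the statement of the earlier Proposition applied to the flat model; but to conclude that the \emph{bundle} $\O = Q/K \to M$ is isomorphic to the bundle of transverse $3$-planes in $\cone_M \to M$, I need this fiberwise identification to be compatible with the $P$-action — i.e.\ to check that the soldering form $\theta$ (the $\gm$-part of $A$) genuinely realizes each tangent space $T_v \cone_M$ as $[v]^\perp \cong \gm \oplus \go$-ish in the way the flat picture demands, and that the transversality condition is the $P$-invariant one. Concretely this amounts to tracing through how $A$ descends: the $\gm$-component gives the soldering/coframe on $M$ pulled back to $\cone_M$, its restriction to the fiber direction picks out the tautological direction, and a transverse $3$-plane is the graph of the $\gp$-valued piece. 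Once one checks that the torsion-freeness / normality hypothesis guarantees these pieces fit together into an honest $3$-plane field rather than something obstructed, the identification is forced. I would organize the proof as: (i) C1--C3 for $A$ with structure group $K$; (ii) construction of $\cone_M$ from Prop.~\ref{conf-equiv-prob}; (iii) the bundle isomorphism $Q/K \xrightarrow{\ \sim\ } \{\text{transverse 3-planes in }\cone_M\}$, built from the graded components of $A$ and checked to be $P$-equivariant using the flat-model Proposition as the fiberwise input.
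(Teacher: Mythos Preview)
Your first paragraph is exactly what the paper does: define $\O=Q/K$ and observe that C1--C3 for the $P$-bundle imply C1--C3 for the $K$-subbundle, so the same $A$ is the Cartan connection.  That part is identical.

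For the identification with transverse $3$-planes, your route and the paper's diverge.  You propose an associated-bundle argument: match the typical fiber $P/K$ with the space of transverse $3$-planes over one tautological fiber (via the flat-model Proposition), then check $P$-equivariance to globalize.  The paper instead gives a direct two-step factorization of $\pi\maps Q\to M$ using the coframes supplied by $A$.  First, fix a $P$-conformally-invariant inner product $\delta$ on $\g/\p$; the coframe $e_{\pi(f)}\maps T_{\pi(f)}M\to\g/\p$ lets you pull back $\delta$ to a genuine metric at $\pi(f)$, defining $\pi'\maps Q\to\Met$ into the tautological bundle, with structure group $\Hpp$.  Second, the coframe on $\Met$ takes values in $\gm\oplus\ggh_o$, and $\pi''(f):=e_{\pi'(f)}^{-1}(\gm)$ is by construction a transverse $3$-plane; this has structure group $K$, so $\O\cong Q/K$.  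The paper's construction is more explicit and sidesteps the equivariance check you flag as the main obstacle: naturality is automatic because each map is literally built from $A$.  Your approach would also work once the equivariance is verified, but it is less self-contained since it leans on the flat-model Proposition as fiberwise input.

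One correction: you speculate that ``torsion-freeness / normality'' may be needed to guarantee the pieces fit.  It is not.  The theorem and its proof make no normality assumption; the identification holds for an arbitrary M\"obius geometry, because the coframe maps $e_{\pi(f)}$ and $e_{\pi'(f)}$ are isomorphisms by C1 alone.  Normality enters only in Prop.~\ref{conf-equiv-prob}, when one wants to go backwards from a bare conformal metric to a \emph{canonical} M\"obius geometry.
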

\begin{proof}
Consider a Cartan geometry modeled on $G/P$, with $\pi\maps\ff\to \S$ the principal $P$ bundle, $A$ the Cartan connection.  Then $\fo:=\ff/K$ is an observer space geometry: the map $\ff \to \fo$ is a principal $K$ bundle and properties \ref{cartan:nondeg}, \ref{cartan:equivariant}, and \ref{cartan:mc} for the stabilizer group $P$ clearly imply the same properties for the subgroup $K\subset P$.  Thus, we need only show that $\fo$ may be identified with the space of transverse 3-planes in the tautological $\R$ bundle over $\S$. 

We first construct a map $\pi'\maps \ff \to \Met$ making $\ff$ a principal $\Hpp$ bundle over the tautological $\R$ bundle $\Met$ over $\S$.  Fix an inner product $\delta$ on $\g/\p$, invariant under $\Ad(P)$ up to a scale, i.e.\ an element of the tautological $\R$ bundle over the model space at the basepoint.  A Cartan connection $A$ gives in particular, for each $f\in \ff$, a coframe at $\pi(f)$: a linear isomorphism 
\begin{equation}
\label{coframe}
 e_{\pi(f)}\maps T_{\pi(f)}\S \to \g/\p.  
\end{equation}
Given $f\in \ff$, we then define $\pi''(f)\in \Met$, an inner product at the point $\pi(f)$, to be the pullback of the fixed inner product $\delta$.  We thus get a factorization of $\pi$:
\[
  \begin{tikzpicture}[>=stealth]
     \node (base) at (0,0) {$\S$};
     \node  (met) at (1,1) {$\Met$}
        edge [->] (base);
    \node (total) at (0,2) {$\ff$}
        edge [->] node [left] {$\pi$} (base)
        edge [->] node [above right] {$\pi'$} (met);
  \end{tikzpicture}
\]  
$A$ becomes a Cartan connection on the bundle $\pi'$, giving $\Met$ the structure of a Cartan geometry modeled on $G/\Hpp$.

Next, we construct a map $\pi''\maps \ff \to \fo$ making $\ff$ a principal $K$ bundle over $\fo$, where $\fo$ is the space of all transverse 3-planes to $\Met$.  The Cartan connection on $\Met$ gives in particular for each $f\in \ff$ a coframe at $\pi'(f)$, which we may think of as
\begin{equation}
\label{coframe2}
 e_{\pi'(f)}\maps T_{\pi'(f)}\Met \to (\gm\oplus \ggh_o).  
\end{equation}
The identification of $\g/\h''$ with $\gm\oplus \ggh_o$  is $K$-invariant, and we get a principal $K$ bundle $\pi''\maps \ff \to \fo$ by defining
\[
   \pi''(f) =  e_{\pi'(f)}^{-1}(\gm).
\]
This gives a factorization of $\pi'$:
\[
  \begin{tikzpicture}[>=stealth]
     \node (base) at (0,0) {$\S$};
     \node  (met) at (.6,1) {$\Met$}
        edge [->] (base);
    \node (3planes) at (1.2,2) {$\fo$}
        edge [->] (met);   
    \node (total) at (0,3) {$\ff$}
        edge [->] node [left] {$\pi$} (base)
        edge [->] node [right] {$\pi'$} (met)
        edge [->] node [above right] {$\pi''$} (3planes);
  \end{tikzpicture}
\]  
and $A$ is the Cartan connection on $\pi''\maps \ff\to \fo$.   Since $\ff$ is a principal bundle $K$ bundle over $\fo$, we have $\fo\cong \ff/K$, so the geometry we have constructed here is isomorphic to the induced geometry on $\ff/K$ described in the first paragraph of this proof.  
\end{proof}

Combining this with Prop.~\ref{conf-equiv-prob} we get a canonical observer space geometry from a conformal metric, via the canonical normal M\"obius geometry.  Of course, the canonical geometry may not be the physically relevant one, which might also have curvature components in $\gm\oplus \go$, just as there may be physical reasons for using a geometry with torsion (curvature components in $\vec\ggh \oplus \ggh_o$ for a spacetime Cartan connection) rather than the canonical torsion-free geometry.   

It is worth emphasizing that what we are doing is quite different from starting with 3-dimensional conformal manifold and constructing a 4-dimensional spacetime with that conformal manifold as part of its boundary \cite{FG}.   Rather, from a 3-dimensional conformal manifold, we construct a 7-dimensional manifold of observers.  Spacetime is a would-be quotient of this observer space, obtained by collapsing the `boost' directions---though in fact these boost directions need not be integrable, in which case spacetime does not exist as a natural quotient.  We explain much more about these ideas in \cite{lifting}, where we also give a sufficient criterion for integrability (though not a necessary criterion, as is crucial in \cite{hohmann}), based on certain curvature components vanishing.   Perhaps most interesting, however, is the case where there is a slight failure of integrability, and spacetime becomes an observer-dependent notion, though in the present context with the `conformal boundary' of spacetime remaining perfectly coherent.

\section{Discussion}

What we have called `holographic special relativity' is about encoding observers in the de Sitter universe using three-dimensional conformal geometry rather than spacetime geometry.  While the two perspectives are isomorphic, they suggest distinct types of Cartan-geometric deformation, making it tempting to propose a `holographic' analog of {\em general} relativity.  This would mean introducing a dynamical theory of conformal geometry, with the associated observer space given by Thm.\ \ref{mainthm}, but where spacetime, or at least spacetime in the usual Lorentzian sense, may become only an approximate concept.  This is just the opposite of what happens in ordinary general relativity, where spacetime remains coherent, but conformal space is no longer a quotient of the associated observer space as it is in the de Sitter universe.  

What sort of dynamical conformal geometry might lead to a physically realistic `holographic general relativity'?  One possibility is that it is to be found already in a recently proposed alternative to general relativity called `shape dynamics.'   This is also a theory in which (3+1)-dimensional spacetime geometry appears to be exchanged for 3-dimensional conformal geometry.  Remarkably, it is equivalent, at the level of classical field theory, to the ADM formulation of general relativity under certain conditions.  Both theories are particular gauge fixings of a `linking theory', and the constraint surface of each can be viewed as a particular gauge-fixing surface of the other, so that on the intersection of these surfaces the classical field theories are indistinguishable \cite{shape}.

On the other hand, the complete geometric description of a gravity theory must involve more than an analysis of classical field theories.   After all, gravity is not a field theory living on some space with a predefined geometry, but a theory that determines the physical geometry.  What is the physical meaning of the conformal space used in shape dyanamics, and how does this conformal space relate to the physical space we see around us, or to some other sort of geometric structure in the usual spacetime picture?  If shape dynamics is to take the place of general relativity, then what theory takes the place of {\em special} relativity?  Is it holographic special relativity as described here, or something else?  These are all questions whose answers should ultimately not involve phase space, canonical analysis, gauge fixing, or other techniques of classical field theory.  Cartan geometry gives a precise framework within which to study such questions.  

Some work toward understanding shape dynamics in such terms has recently been done one dimension down, where spacetime has 2+1 dimensions \cite{gryb}.    These authors essentially use the relationship between the Lie algebra splittings (\ref{obs-splitting}) and (\ref{conf-splitting}) (or rather their analogs for $\g=\so(3,1)$) to rewrite spacetime fields as fields on the conformal 2-sphere.  They find results consistent with our conjecture: the procedure turns 3-dimensional Chern--Simons gravity into shape dynamics.  However, while this work was motivated by Cartan geometry (in part through discussions with the present author), the Cartan geometric picture is not made explicit.  Moreover, as usual, while (2+1)-dimensional gravity is more tractable, it is also special enough that it could be misleading.  Only a careful geometric study in  3+1 dimensions will tell.

If these ideas are right, they have strong implications for the interpretation of shape dynamics.  For example, shape dynamics is so far thought of as a theory with `spatial conformal symmetry' rather than the refoliation symmetry of spacetime-based Hamiltonian approaches, like ADM.  However, if shape dynamics is really a `holographic general relativity' of the kind we have suggested, then the adjective `spatial' may not be quite appropriate: the relevant conformal space may not represent `space' as we usually understand it.  Indeed, in holographic special relativity, the conformal 3-sphere $\GP$ represents not `space' but a certain space of extended families of observers.  Points of `space', in the usual sense of the 3-dimensional world we see around us, are {\em nonlocal} features of conformal space.  Indeed, one might suspect this to be the geometric reason behind the nonlocal nature of the shape dynamics Hamiltonian.

Whether or not holographic general relativity turns out to take shape in the form of `shape dynamics', it is interesting to ponder the implications of alternative ways to represent observers.   Minkowski insisted that  only ``a kind of union'' (``eine Art Union") of space and time will endure.  The question we have been asking here is: {\em What kind of union?}  We have seen that spacetime is not the only possible answer.  To emphasize this point, it is worth pointing out some other possible answers, besides the holographic special relativity we have described.   Each different possibility comes with different potential consequences, both for generalizing to dynamical geometry and for quantization.

For example, another alternative to spacetime is implicit in what we have already explained.  We defined the observer space $\GK$  to be the unit future tangent bundle of de Sitter spacetime $\GH$ and found it convenient to do this using the embedding of $\GH$ and $\Hyp$ into the ambient space $\R^{4,1}$:
\[
   \GK = \{(x,u) \in \GH\times \Hyp : \eta(x,u) = 0\}.
\]
Notice, however, that this definition is completely symmetric.  We could just as well have defined $\GK$ to be the unit tangent bundle of $\Hyp$, and indeed the unit future tangent bundle of $\GH$ and the unit tangent bundle of $\Hyp$ are $G$-equivariantly isomorphic.  The hyperbolic space $\Hyp$ is the space of all possible `nows'---all notions of simultaneity that observers in de Sitter spacetime can have.   Thus, de Sitter observers are uniquely specified by the direction in which they are passing through this space of nows.  Because of the temporal peculiarities of de Sitter spacetime, two observers with the same `now' can even move in opposite directions in this space, though these two are on opposite sides of a cosmological horizon, hence cannot interact.  

Also, one can derive a version of `holographic special relativity' using anti de Sitter spacetime, rather than de Sitter spacetime.  The procedure is similar to what we have done here, if somewhat more involved.  The conformal boundary is at spacelike infinity, with Lorentzian signature, and observers' geodesics do not meet it.  Instead, an observer has a canonical notion of `space'---the orthogonal complement of the velocity vector.  This `spatial' subspace of the tangent space extends to a totally geodesic spacelike hypersurface in spacetime, and this leaves a 2-dimensional footprint at the conformal boundary.  

Another related way to represent observer space lies at the foundations of twistor theory, and what we have done here is admittedly similar in spirit to the twistor approach, if not in the details.   In twistor theory, a primary object of study is the 5-dimensional space $P(N)$ of all 1-dimensional null affine subspaces of $\R^{3,1}$.  Much like in our context, spacetime is viewed as a secondary construction, in this case given as the space of 2-spheres in $P(N)$, which can be identified with the celestial spheres at points of spacetime.  Different observers at the same point are related by conformal transformations of this celestial sphere \cite{PR}.

In Klein geometric language, this means the observer space of Minkowski spacetime is $\ISO_o(3,1)$-isomorphic to space of ways to realize spheres in $P(N)\cong\ISO_o(3,1)/(\SIM(2)\times \R)$ as standard spheres.  The space $P(N)$ is a $\Lambda\to 0$ limit of its obvious analog in de Sitter spacetime.  This space is  the boundary of the space $\GKt$ of inertial observers, which are just timelike geodesics.  

In any case, the present work serves as a further example of the seemingly universal nature of observer space geometry for theories of space and time.  

\subsection*{Acknowledgments}
I thank Julian Barbour for interesting discussions and for hospitality; my attempts to understand his ideas about conformal space replacing spacetime led me to write this article.   Conversations with James Dolan clarified the geometric picture.  I also thank Sean Gryb for discussions.  

\small

\end{document}